\documentclass{llncs}

\usepackage{amssymb,amsmath}
\usepackage{color}
\usepackage[T1]{fontenc}

\pagestyle{plain}

\newcommand{\W}[1][XXX]{\normalfont{W[#1]}}
\newcommand{\NP}{\normalfont{NP}}
\newcommand{\Poly}{\normalfont{P}}
\newcommand{\FPT}{\normalfont{FPT}}
\newcommand{\NPO}{\normalfont{NPO}}

\newcommand{\Card}[1]{|#1|}
\newcommand{\Yes}{\textsc{Yes}}
\newcommand{\No}{\textsc{No}}

\newcommand{\pcproblem}[4]{\begin{samepage}\begin{quote} \textsc{#1}\\ 
\textit{Instance:} #2\\ \textit{Parameter:} #3\\ \textit{Question:} #4 \end{quote}\end{samepage}}
\newcommand{\pcaproblem}[4]{\begin{samepage}\begin{quote} \textsc{#1}\\ 
\textit{Instance:} #2\\ \textit{Parameter:} #3\\ \textit{Output:} #4 \end{quote}\end{samepage}}

\definecolor{lightyellow}{cmyk}{0,0,.7,0}

\begin{document}

\institute{SITE, University of Ottawa, 800 King Edward, Ottawa, Ontario K1N 6N5, Canada. \email{casteig@site.uottawa.ca}
\and Department of Computing, Macquarie University, Sydney, NSW 2109, Australia. \email{\textbraceleft{}bernard.mans,luke.mathieson\textbraceright{}@mq.edu.au}}

\title{On the Feasibility of Maintenance Algorithms in Dynamic Graphs}
\author{Arnaud Casteigts\inst{1} \and Bernard Mans\inst{2} \and Luke Mathieson\inst{2}}
\date{}
\maketitle

\begin{abstract} 
Near ubiquitous mobile computing has led to intense interest in dynamic graph theory. This provides a new and challenging setting for algorithmics and complexity theory. For any graph-based problem, the rapid evolution of a (possibly disconnected) graph over time naturally leads to the important complexity question: is it better to calculate a new solution from scratch or to adapt the known solution on the prior graph to quickly provide a solution of guaranteed quality for the changed graph?

In this paper, we demonstrate that the former is the best approach in some cases, but that there are cases where the latter is feasible. We prove that $\W[1]$-hardness for the parameterized approximation problem implies the non-existence of a maintenance algorithm for the given approximation ratio, even given time exponential in the solution size and polynomial in the over all size --- i.e., even with a large amount of time, having a solution to a very similar graph does not help in computing a solution to the current graph. To achieve this, we formalize the idea as a \emph{maintenance algorithm}. To illustrate our results we show that \textsc{$r$-Regular Subgraph} is $\W[1]$-hard for the parameterized approximation problem and thus has no maintenance algorithm for the given approximation ratio. Conversely we show that \textsc{Vertex Cover}, which is fixed-parameter tractable, has a $2$-approximate maintenance algorithm. The implications of $\NP{}$-hardness and $\NPO{}$-hardness are also explored.
\end{abstract}

\section{Introduction}

\sloppypar With the development of sufficiently small and powerful hardware, mobile computing devices present a new and interesting network environment in which the complexity and algorithmics for even well understood problems can change radically. To model this sort of network, various notions of dynamic graphs have been developed, including \emph{delay-tolerant}~\cite{Fall03}, \emph{disruptive-tolerant}, \emph{intermittently-connected}~\cite{Zhang06}, \emph{opportunistic}, \emph{time-varying}~\cite{CasteigtsFMS10} and \emph{evolving}~\cite{XuanFerreiraJarry03,CCF12}. Each of these models differing aspects of the dynamics under differing assumptions or with different applications in mind. One of the key aspects of dynamic graphs is that a graph may not be connected at any given moment, or ever, but due to the appearance and disappearance of edges, it may still be possible to construct journeys (rather than paths) through the graph over time and space. However, dependent on the assumptions of the system, there may be even fundamental properties or traditional problems which are incomputable or definitionally ambiguous (e.g.,~\cite{XuanFerreiraJarry03}). 
The results presented here are independent of the model as long as the change in the graph occurs (or can be described) in a discrete manner.

Given a traditional graph problem $\Pi$ and a sequence of graphs $\{G_i\}$, we may reframe $\Pi$ in several ways as a dynamic graph problem; \emph{permanent} --- where we try to find a solution that holds at all points in time (e.g., a single set of dominators that cover all nodes in every $G_i$), \emph{over-time} --- where the solution is defined with respect to the sequence as a whole (e.g., a set of dominators such that each node is covered in at least one $G_i$), and \emph{evolving} --- where we compute a (possibly different) solution for each $G_i$. 
Permanent and over-time problems may appear somewhat less general in that they require prior knowledge of the graph dynamics.
However it is worth noting that there are relevant in few, yet important, practical scenarios, such as with a known schedule 
 (e.g., public transports, low-earth satellites, sensors with sleeping schedule), or a known schedule property (e.g., \emph{periodicity}~\cite{FMS09} or \emph{recurrence}~\cite{CasteigtsFMS10} --- i.e., edges that exist once are guaranteed to re-appear at some known, bounded, or unbounded time). 

It can be observed that for some problems, e.g., {\em covering} problems like {\sc Vertex Cover}, all three variants are strongly related. Given such a problem $\Pi$, one can easily check that a solution to {\em permanent} $\Pi$ is a (possibly far from optimal) solution to {\em evolving} $\Pi$, and a solution to any $G_i$ in the {\em evolving} $\Pi$ is also valid for {\em over-time} $\Pi$. The connexion is even stronger: the {\em intersection} of solutions to {\em evolving} $\Pi$ is valid for {\em over-time} $\Pi$, and their {\em union} is a solution to {\em permanent} $\Pi$. From this perspective, the {\em evolving} variant appears quite central, and the 
\emph{permanent} or \emph{over-time} $\Pi$ actually form upper and lower bounds for each solution in the  \emph{evolving} $\Pi$.
In this paper, we focus on the \emph{evolving} $\Pi$ and look at finding the solution for each point in time.

The dynamic context is an interesting algorithmic and complexity setting. Although a na\"{i}ve approach would simply compute a new solution for every $G_i$, it is possible that given a relatively limited amount of change in the graph at each step we may leverage the previous solution to allow quick computation of the new solution. This leads to the idea of a \emph{maintenance algorithm} (q.v., Section~\ref{sect:feasibility}). Moreover this approach fits well with the real world inspiration, where along with a possibly rapid pace of change, there is often no centralised control. If the previous solution can be used to compute the new solution, this suggests a certain localization, which would be ideal for mobile infrastructureless networks. 
Considering that dynamic graphs can be permanently disconnected (while still offering connectedness over time), we initially consider well-known problems that are not ambiguous or undefined in a disconnected setting: \textsc{$r$-Regular Subgraph} and \textsc{Vertex Cover}. A solution must be well defined for each (possibly) partitioned component and trivial solutions must exist for trivial cases (e.g., single nodes).

As we are mainly interested in maintaining a solution of some guaranteed quality for each particular problem,
our results can be summarized as follows:
\begin{itemize}
\item we prove that $\W[1]$-hardness for the parameterized approximation problem implies the non-existence of a maintenance algorithm for the given approximation ratio, even given time exponential in the solution size and polynomial in the over all size,
\item we show that \textsc{$r$-Regular Subgraph} and several variants have no parameterized approximation algorithms thus have no $\FPT{}$-time maintenance algorithm, and conversely,
\item we show that problems that are fixed-parameter tractable may have an approximate maintenance algorithm by proving that this is indeed the case for \textsc{Vertex Cover},
\item we establish a complexity classification for maintenance algorithms that provides a strong relationship with parameterized approximation complexity~\cite{Marx08}.
\end{itemize}

Similar ideas regarding dynamic algorithms and complexity have been explored before, but from notably different positions. Typically previous work has focussed on polynomial or logarithmic time problems, and did not include the consideration of locality that is central to many practical applications for dynamic complexity (central oversight and communication is neither guaranteed nor generally desirable in a dynamic network), Section~\ref{sect:feasibility} gives the relevant definitions. However many interesting results and ideas are applicable in this context. Patnaik \& Immerman~\cite{PatnaikImmerman97} consider dynamic complexity from a descriptive complexity theory perspective, defining \textsc{DynFO}, a class of dynamic problems that are expressible in first order logic. Weber \& Schwentick~\cite{WeberSchwentick07} build upon this, again concentrating on a descriptive complexity approach. Holm, de Lichtenberg \& Thorup~\cite{HolmLichtenbergThorup01} give a series of results that can readily be interpreted as maintenance algorithms in our context. Their results for \textsc{Connectivity}, \textsc{2-Edge} and \textsc{Biconnectivity} rely on the maintenance under edge deletion and addition of a solution for \textsc{Minimum Spanning Forest}, giving polylogarithmic running times for all problems, but with no bound on locality. Miltersen \emph{et al.}~\cite{Miltersen1994} present another, similar approach where the dynamism is achieved at a lower level by perturbing individual bits in the input. They also focus on problems of polynomial complexity showing, in our context, that problems such as the \textsc{Circuit Value Problem} and \textsc{Propositional Horn Satisfiability} have no polylogarithmic maintenance algorithms but that interestingly there exist other $\Poly{}$-complete problems that do. Ausiello, Bonifaci \& Escoffier~\cite{AusielloBonifaciEscoffier11} discuss a different model, where there is only interest in using an existing optimal solution to solve (to some degree of approximation) a perturbed instance (the model is generally called \emph{reoptimization}). This approach is in some ways an unsuitable perspective for our context, as we do not expect at any point to be able to compute an optimal solution, however their negative results in particular carry to our setting, such as \textsc{Min Coloring} having no $\Poly{}$-time maintenance algorithm for approximation ratio $\frac{4}{3}-\varepsilon$ for any $\varepsilon$.

The rest of the paper is organized as follows. We define our notation and provide our main results in Section~\ref{sect:feasibility}. We provide some basic and necessary background on parameterized complexity and parameterized approximation in Section~\ref{sect:para}. We provide the main proofs of our results and of results on parameterized approximation complexity for \textsc{Vertex Cover}, \textsc{$r$-Regular Subgraph} and some generalizations of \textsc{$r$-Regular Subgraph} in Section~\ref{sec:evolving_graphs}. We give some concluding remarks in Section~\ref{sect:conclusion}.

\section{The Feasibility of Maintenance}
\label{sect:feasibility}

The key difference in the dynamic setting with regards to computation is that as the graph is continually changing, we are required to continually recompute solutions for the problem. Hence the style of algorithm we are concerned with is somewhat different. In this case we would ideally like an algorithm that can exploit the known solution to the previous graph, which is somewhat similar to the current graph, to produce a solution for the current graph that maintains a desired solution quality (whether optimal or within some approximation bound) in a time faster than it would take to completely recompute the solution. We formalize the idea as a \emph{maintenance algorithm}.

\begin{definition}[Maintenance Algorithm]
Given a sequence of graphs $\{G_{i}\}$ where the editing distance between $G_{i}$ and $G_{i+1}$ is $1$, $\mathcal{A}$ is a maintenance algorithm for problem $\Pi$ if given a solution $S_{i}$ for $\Pi$ on graph $G_{i}$, $\mathcal{A}$ can compute a solution $S_{i+1}$ for $\Pi$ on graph $G_{i+1}$ that preserves the quality of the solution.
\end{definition}

The \emph{editing distance} between two graphs is the number of \emph{editing operations} that have to be performed to obtain one from the other. Typically the editing operations we will be interested in are some subset of edge addition, edge deletion, vertex addition and vertex deletion. We set the editing distance here to $1$ in order to remain sufficiently general in establishing \emph{negative} results.

As the graph changes relatively quickly, we would like the computation to be done within a limited computation time (preferably a small constant). The notion of a \emph{bounded} maintenance algorithm formalizes this bound.

\begin{definition}[Bounded Maintenance Algorithm]
A maintenance algorithm is bounded if there is a function $f$ such that at each step, the computation performed at each vertex is time bounded by $f$. We refer to such an algorithm as a $f$-maintenance algorithm.
\end{definition}

Also, because in a practical setting the nodes model independent devices, we would like the computation to be done as locally as possible, {\it i.e.,} the state of each vertex being based only on the state of its $r$-neighborhood, with $r \ll diameter(G)$.

\begin{definition}[$r$-Local Maintenance Algorithm]
A maintenance algorithm is $r$-local if the state of each vertex is computed based only on the states of vertices within distance $r$. We refer to such an algorithm as a $r$-local maintenance algorithm.
\end{definition}

For example a $1$-local $O(1)$-maintenance algorithm allows each vertex to perform a constant number of computational steps at each iteration of the algorithm, knowing only the states of its direct neighbors. It is with these computational restrictions that knowledge of a prior solution and the concept of a maintenance algorithm can be most useful. Indeed there are many prior results that fit into this framework, though previously there has been little consideration of the locality. For example,

\begin{lemma}[\cite{Miltersen1994}]
\textsc{Undirected Forest Accessibility} has a $d$-local $O(\log n)$-maintenance algorithm where $d$ is the diameter of the graph and $n$ is the number of vertices.
\end{lemma}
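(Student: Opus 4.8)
The plan is to maintain, as the ``solution'' to \textsc{Undirected Forest Accessibility}, a labelling in which every vertex $v$ stores the identity $\ell(v)$ of a canonical representative of its tree --- concretely, the minimum vertex identifier reachable from $v$. Two vertices $u,v$ are then mutually accessible exactly when $\ell(u)=\ell(v)$, so the accessibility relation is read off from the labels in constant time. Since identifiers have $O(\log n)$ bits, storing and comparing labels costs $O(\log n)$ per vertex. It remains to show that this labelling can be repaired after a single forest-preserving edit --- an edge insertion joining two trees, or an edge deletion splitting a tree --- using only $O(\log n)$ work per vertex per round and information drawn from within distance $d$.

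Both updates reduce to re-running a distributed minimum-flooding computation, in which each vertex repeatedly replaces its label by the minimum of its own label and those of its neighbours. For an \emph{insertion} of $\{u,v\}$ merging trees with representatives $a$ and $b$, say $a<b$, the value $a$ simply propagates across the far tree, overwriting every occurrence of $b$; because labels only ever decrease, this wave is monotone and terminates once $a$ has reached the furthest vertex, i.e.\ within a number of rounds bounded by the diameter $d$. For a \emph{deletion} of $\{u,v\}$ splitting a tree into $T_u\ni u$ and $T_v\ni v$, each side first resets its labels to the local identifiers and then re-floods the minimum within its new subtree; since the deleted edge separates the two waves, each converges in at most $d$ rounds, after which every vertex again holds the minimum identifier of its current tree. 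Assembling the pieces, every vertex performs a constant number of $O(\log n)$-bit comparisons per round while consulting only the states of its immediate neighbours, and the repair completes once information has travelled across the affected tree, whose diameter is at most $d$; this yields the claimed $d$-local $O(\log n)$-maintenance algorithm.

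The step I expect to be the crux is the \emph{deletion} case. Insertion is benign because the labelling changes \emph{monotonically}: a smaller representative is broadcast and no vertex ever needs to discover a value it cannot already see locally. Deletion, by contrast, is non-monotone --- the side that loses the old representative must \emph{raise} its labels to a new minimum that may sit arbitrarily far away in the surviving subtree --- so a purely local patch is impossible and the reset-and-recompute across the whole component is genuinely required. This is exactly what forces the locality up to the full diameter $d$, and it also explains why no radius smaller than the diameter can suffice in the worst case, since a single deletion at one end of a long path can only be resolved by relabelling the entire path.
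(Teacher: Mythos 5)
There is a genuine gap, and it sits exactly where you placed the crux. The paper does not actually prove this lemma at all --- it is imported by citation from Miltersen \emph{et al.}, whose $O(\log n)$ bound comes from a centralized dynamic data structure for forests (in the style of link-cut/Euler-tour trees) whose \emph{total} update cost per edit is $O(\log n)$, which trivially also bounds the work at any vertex. Your distributed min-flooding scheme does not achieve this. The paper's definition of an $f$-maintenance algorithm bounds the computation performed at each vertex \emph{per step}, i.e.\ per graph edit, not per synchronous round; your argument only establishes ``a constant number of $O(\log n)$-bit comparisons \emph{per round},'' and then runs up to $d$ rounds per edit. For insertion this is harmless, since monotonicity means each vertex adopts the incoming smaller label exactly once, so its total work per edit is $O(\log n)$. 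But in the deletion case your reset-and-reflood of the \emph{minimum} identifier forces a vertex to update its label once for every round in which the minimum within its growing radius strictly decreases. On a path $v_1 - v_2 - \cdots - v_n$ with identifiers strictly decreasing along the path, deleting the last edge makes $v_1$'s label change in essentially every round: its work per single edit is $\Theta(d \log n) = \Theta(n \log n)$, not $O(\log n)$. So as written you have proved the existence of a $d$-local $O(d\log n)$-maintenance algorithm, which is weaker than the stated lemma by a factor that can be linear in $n$.

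The flaw is tied to your choice of invariant, not to the wave idea itself, and it is repairable under the paper's per-vertex accounting: abandon ``minimum reachable identifier'' as the canonical label. After deleting $\{u,v\}$, let $u$ broadcast the label $u$ through its side and $v$ broadcast the label $v$ through its side; since the deleted edge is gone, each wave covers exactly one of the two new trees, every vertex relabels exactly once, and canonicity (equal labels iff same tree) is restored with $O(\log n)$ work per vertex per edit. Insertions are handled as you describe, with the losing tree adopting the other tree's label in a single pass. With that modification your argument is a legitimate, genuinely different route to the lemma --- a distributed relabelling algorithm rather than the sequential dynamic-tree data structure the paper's citation relies on --- though note the two statements it proves differ in flavor: yours bounds per-vertex work with total work $\Theta(n\log n)$ per edit spread across the component, whereas Miltersen \emph{et al.} bound the \emph{total} update time by $O(\log n)$.
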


\begin{lemma}[\cite{HolmLichtenbergThorup01}]
\textsc{Minimum Spanning Forest} has a $d$-local $O(\log^{4}n)$-maintenance algorithm where $d$ is the diameter of the graph.
\end{lemma}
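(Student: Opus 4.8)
The plan is to obtain this maintenance algorithm by reinterpreting the fully-dynamic minimum spanning forest data structure of Holm, de Lichtenberg \& Thorup in the local maintenance model. First I would recall that their structure maintains a minimum spanning forest of a graph subject to single edge insertions and deletions in $O(\log^4 n)$ amortized time per update. Since a single edge insertion or deletion is exactly an editing operation of editing distance $1$, such an update matches the input assumed by a maintenance algorithm: we are handed a minimum spanning forest $S_i$ of $G_i$, one edge is edited to produce $G_{i+1}$, and the structure produces a minimum spanning forest $S_{i+1}$ of $G_{i+1}$. Because the maintained forest is always of minimum total weight, the quality of the solution (here, optimality) is preserved, so this is a genuine maintenance algorithm in the sense of the definition.

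Second, I would establish the bound $f = O(\log^4 n)$. The HLT structure organises the edges into $O(\log n)$ levels and maintains a spanning forest at each level using Euler-tour trees (or top trees), so that the elementary operations --- link, cut, connectivity tests, and minimum-weight queries along a tree path --- each cost $O(\log n)$. Combining the per-operation cost with the amortized number of edge-level changes and replacement searches triggered by a single update yields the $O(\log^4 n)$ bound that we adopt as the function $f$.

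Third, and this is where the locality claim enters, I would argue that an update can be carried out consulting only vertices within distance $d$, but not necessarily fewer. When a non-tree edge is deleted, or an inserted edge joins two distinct trees, the change is entirely local. The difficulty arises in the two nontrivial cases: deleting a tree edge requires searching for a minimum-weight replacement edge reconnecting the two resulting components, and inserting an edge that closes a cycle requires locating (and possibly swapping out) the maximum-weight edge on the induced tree path. In both cases the relevant edge may lie anywhere in the affected component, so the search traverses tree paths and candidate-edge lists spanning the component; this is bounded by the diameter $d$, giving $d$-locality. Conversely one can exhibit instances --- e.g.\ a long path whose deleted tree edge has its unique replacement at the far end --- in which any correct algorithm must consult vertices $\Theta(d)$ away, so the diameter bound cannot in general be improved to a constant.

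The main obstacle I anticipate is reconciling the centralized, amortized data-structure bound of HLT with the per-vertex, local accounting demanded by the maintenance-algorithm definitions: one must verify that the edge-level relabelling and the replacement or swap searches can be realised so that each participating vertex performs only $O(\log^4 n)$ work while communicating solely with its $d$-neighborhood, rather than relying on a global controller. Making this translation precise --- in particular handling amortization in a model that charges computation to individual vertices --- is the delicate step; the underlying correctness and the weight-optimality of the maintained forest then follow directly from the established properties of the HLT structure.
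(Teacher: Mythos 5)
Your proposal is correct and matches the paper's approach: the paper gives no proof of this lemma at all, simply citing the fully-dynamic minimum spanning forest result of Holm, de Lichtenberg \& Thorup and reinterpreting its $O(\log^4 n)$ amortized update time as the bound $f$, with the $d$-locality being the essentially trivial observation that any update consults only the affected component, whose radius is bounded by the diameter. Your additional care about amortization and per-vertex accounting goes beyond what the paper addresses (it explicitly notes HLT's results come ``with no bound on locality'' and does not resolve this), so you have if anything been more thorough than the source.
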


\begin{lemma}[\cite{RodittyZwick04}]
\textsc{Directed Reachibility} has a $d$-local $O(m+n\log n)$-maintenance algorithm where $d$ is the diameter of the graph, $m$ is the number of edges and $n$ is the number of vertices.
\end{lemma}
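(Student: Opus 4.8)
The plan is to realise the fully dynamic reachability data structure of Roditty and Zwick~\cite{RodittyZwick04} as a maintenance algorithm $\mathcal{A}$ in the sense of our definitions, and then to verify that it meets the claimed locality and time bounds. Here the \emph{solution} $S_i$ is the reachability relation (transitive closure) of the current digraph $G_i$, stored in the Roditty--Zwick structure, so that for any ordered pair $(u,w)$ one can decide whether $u$ reaches $w$ in $G_i$. Since consecutive graphs differ by editing distance $1$, the transition from $G_i$ to $G_{i+1}$ is a single edge insertion or deletion --- exactly the atomic update operation their structure supports. I would treat edge edits as the relevant operations, reachability being an edge-driven property.

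First I would invoke the update guarantee: Roditty and Zwick show that a single edge insertion or deletion can be processed, and the maintained structure brought up to date, in amortised time $O(m + n\log n)$. Taking this refreshed structure as $S_{i+1}$ and reading off reachability as above, $\mathcal{A}$ produces from $S_i$ a correct solution $S_{i+1}$ within the stated bound; this establishes that $\mathcal{A}$ is an $O(m + n\log n)$-maintenance algorithm in the bounded sense, with $f(m,n) = O(m + n\log n)$ the per-update work.

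The substantive point, and the step I expect to be the main obstacle, is the locality claim. I would argue that a single edit is $d$-local because reachability is witnessed by directed paths whose length is bounded by the diameter: inserting or deleting an edge $(u,v)$ can create or destroy a reachability $x \rightsquigarrow y$ only if some directed path between $x$ and $y$ routes through $(u,v)$, and any such witness can be taken to be a simple path of length at most $d = \mathrm{diameter}(G)$. Consequently the reachability state of a vertex depends only on vertices within directed distance $d$, so each vertex can recompute its state from its $d$-neighbourhood. Two points require care. The diameter of a digraph that is not strongly connected must be read as the largest \emph{finite} directed distance over ordered reachable pairs (equivalently, one may fall back on the trivial bound $d \le n$). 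And, because the Roditty--Zwick analysis is centralised and amortised, the locality argument must be made at the level of \emph{which} vertices' states can change, rather than by literally decomposing their update procedure into independent per-vertex computations; reconciling the aggregate $O(m+n\log n)$ bound with the strictly per-vertex accounting of the bounded definition is exactly the reinterpretation work that this lemma packages.
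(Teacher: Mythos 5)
The paper offers no proof of this lemma at all: it is stated by direct citation to Roditty and Zwick, as one of several prior fully dynamic results being reinterpreted inside the maintenance-algorithm framework (the surrounding text explicitly notes these algorithms come ``with no bound on locality,'' which is why the stated locality is the trivial one, $d = \mathrm{diameter}(G)$). Your reconstruction --- taking the Roditty--Zwick structure as the maintained solution $S_i$, one edge edit per step, the amortised $O(m+n\log n)$ update bound as $f$, and the diameter bound as a trivial locality guarantee --- is exactly the intended reinterpretation, and is in fact more careful than the paper on the two points it silently glosses over (the amortised rather than worst-case nature of the update time, and how to read ``diameter'' for digraphs that are not strongly connected).
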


Ausiello, Bonifaci \& Escoffier~\cite{AusielloBonifaciEscoffier11} examine the sort of harder problems that we are interested in (rather than $\Poly{}$-time problems), however their setting requires an initially optimal solution and only covers a single perturbation, rather than the highly dynamic environment expected from the application domain we consider.

Of course there are limits to the existence of maintenance algorithms:

\begin{theorem}
For any computable function $f$ and polynomial $p$, if a problem $\Pi$ is $\W[t]$-hard for parameter $k$ and any $t \geq 1$, $\Pi$ has no $O(f(k)p(n))$-maintenance algorithm unless $\W[t] = \FPT{}$ where $n$ is the size of the input.
\end{theorem}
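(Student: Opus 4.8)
The plan is to argue by contraposition: I will show that an $O(f(k)p(n))$-maintenance algorithm $\mathcal{A}$ for $\Pi$ yields an $\FPT$ algorithm for $\Pi$ itself, and since $\Pi$ is $\W[t]$-hard this forces the collapse $\W[t] = \FPT$ (using that $\FPT \subseteq \W[t]$ always holds). The mechanism is a \emph{build-up simulation}: any instance can be reached from a trivial one by a polynomially long sequence of unit edits, and $\mathcal{A}$ carries a guaranteed-quality solution cheaply along that sequence.

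Concretely, I would fix an instance $(G,k)$ of $\Pi$ with $n=|V(G)|$ and pick a trivial base graph $G_0$ on the same vertex set — for the problems considered the edgeless graph serves — for which a solution $S_0$ of the required quality is immediate. I then exhibit a sequence $G_0, G_1, \dots, G_m = G$ with consecutive editing distance $1$; since $G$ has at most $\binom{n}{2}$ edges on $n$ vertices, the sequence has length $m = O(n^2)$, polynomial in the input. Running $\mathcal{A}$ from $(G_0,S_0)$ along this sequence produces $S_1, \dots, S_m$, where each $S_{i+1}$ is a solution for $G_{i+1}$ of the same guaranteed quality as $S_i$; the decision for $(G,k)$ is then read off the quality of $S_m$.

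For the running time, each call to $\mathcal{A}$ costs $O(f(k)p(n))$ per vertex, hence $O(n\,f(k)p(n))$ per edit when simulated sequentially, and over the $m=O(n^2)$ edits the total is $O(n^3 f(k)p(n)) = f(k)\cdot q(n)$ for a polynomial $q$. This is an $\FPT$ running time, so $\Pi \in \FPT$ and the collapse follows.

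The step I expect to be the main obstacle is controlling the parameter along the intermediate graphs. The bound is stated through $k$, but the quantity actually driving $\mathcal{A}$ at step $i$ is the parameter of $G_i$ (e.g.\ its optimal solution size), which need not match the target $k$. For problems monotone under the chosen edit — such as \textsc{Vertex Cover} under edge addition — every intermediate parameter is dominated by that of $G$, and by stopping as soon as the solution quality first crosses the threshold determined by $k$ one keeps each intermediate parameter at most $k$, so $f(k)$ applies uniformly and the argument closes. The delicate part in full generality is to choose the base graph and the order of edits so that every intermediate parameter stays bounded by a computable function of $k$ (or of $n$); once that is secured, telescoping polynomially many $\FPT$-time steps yields the stated conclusion.
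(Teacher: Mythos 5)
Your proposal is correct and is essentially the paper's own proof: the same build-up simulation from the edgeless graph on $\Card{V(G)}$ vertices, adding edges one at a time over $O(n^2)$ steps, and the same $O(n^{3}f(k)p(n))$ accounting yielding an $\FPT{}$ algorithm that contradicts $\W[t]$-hardness. Your worry about intermediate parameters is not an obstacle in the paper's formalization, since the parameter $k$ is part of the instance and stays fixed as $(G_i,k)$ throughout the edit sequence, so $f(k)$ applies uniformly without any monotonicity assumption.
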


\begin{proof}
Assume that $\Pi$ has such an algorithm.

Then let $(G,k)$ be an instance of $\Pi$ and $G_{0}$ be the trivial instance defined by taking the completely disconnected graph on $\Card{V(G)}$ vertices with trivial optimal solution $S_{0}$. Let $G_{0}, \ldots ,G_{\Card{E(G)}}$ be a sequence of graphs generated by adding the edges of $G$ one by one in arbitrary order to $G_{0}$, resulting in $G_{\Card{E(G)}} = G$.

As $\Pi$ has a maintenance algorithm beginning with $G_{0}$ the algorithm can be applied to each pair $G_{i}$, $G_{j}$ where $j=i+1$ with solution $S_{i}$ to obtain solution $S_{j}$. Then if $S_{\Card{E(G)}}$ is a witness that $(G,k)$ is a \Yes{}-instance, the algorithm answers \Yes{} (or returns $S_{|E(G)|}$ in the case of a search problem) and \No{} otherwise.

Then we have an algorithm that performs $O(f(k)p(n))$ operations for each vertex at each iteration. As there are at most $n^{2}$ iterations, the overall algorithm has running time $O(n^{3}f(k)p(n))$, and hence $\Pi$ is fixed-parameter tractable, which contradicts the $\W[t]$-hardness of $\Pi$ with parameter $k$.
\end{proof}

By the same argument, we can also obtain a classical complexity analogue.

\begin{corollary}
For any polynomial $p$, if a problem $\Pi$ is $\NP{}$-hard, $\Pi$ has no $O(p(n))$-maintenance algorithm that optimally solves $\Pi$ unless $\Poly{} = \NP{}$ where $n$ is the size of the input.
\end{corollary}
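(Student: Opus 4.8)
The plan is to mirror the proof of the preceding theorem exactly, specializing it to the classical (unparameterized) setting. The corollary is the $\NP$-hardness analogue of the $\W[t]$-hardness result, so the same reduction from a maintenance algorithm to a from-scratch solver should work, with the running time collapsing from $O(n^3 f(k) p(n))$ to the purely polynomial $O(n^3 p(n))$.

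First I would take an arbitrary instance $G$ of the $\NP$-hard problem $\Pi$, and build the same sequence $G_0, \ldots, G_{\Card{E(G)}}$: let $G_0$ be the completely disconnected graph on $\Card{V(G)}$ vertices with its trivial optimal solution $S_0$, and add the edges of $G$ one at a time in arbitrary order so that $G_{\Card{E(G)}} = G$. Since consecutive graphs differ by a single edge addition, the editing distance between $G_i$ and $G_{i+1}$ is $1$, so the hypothesized $O(p(n))$-maintenance algorithm applies to each consecutive pair. Starting from the optimal solution $S_0$ and iterating, I obtain an (optimal, since the algorithm solves $\Pi$ optimally) solution $S_{\Card{E(G)}}$ for $G$ itself.

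Next I would bound the total running time. There are at most $n^2$ edges and hence at most $n^2$ iterations; each iteration performs $O(p(n))$ work at each of the $n$ vertices, for $O(n\, p(n))$ per iteration. This gives an overall running time of $O(n^3 p(n))$, which is polynomial in $n$. Because the maintenance algorithm solves $\Pi$ optimally at each step, the final solution $S_{\Card{E(G)}}$ optimally solves $\Pi$ on $G$, so we have a polynomial-time exact algorithm for an $\NP$-hard problem, forcing $\Poly = \NP$.

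Since the structure is a direct specialization of the already-proved theorem, there is no real obstacle here; the only point meriting a word of care is that the corollary hypothesizes a maintenance algorithm that solves $\Pi$ \emph{optimally}, whereas the theorem phrased things for the decision/search problem generally. I would simply note that optimality is preserved inductively along the sequence (the maintenance algorithm by definition preserves solution quality, and $S_0$ is optimal on the trivial $G_0$), so $S_{\Card{E(G)}}$ is genuinely optimal for $G$, which is what is needed to solve the $\NP$-hard optimization (or its associated decision) problem in polynomial time.
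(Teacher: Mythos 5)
Your proposal is correct and matches the paper's intent exactly: the paper proves this corollary simply by noting it follows ``by the same argument'' as the preceding theorem, and your write-up is precisely that argument specialized to the classical setting, with the running time collapsing to $O(n^3 p(n))$ and the inductive preservation of optimality along the edge-addition sequence made explicit. No gaps; your added remark about optimality being maintained from the trivial solution $S_0$ onward is a reasonable point of care that the paper leaves implicit.
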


Similarly, classical approximation results are preserved.

\begin{corollary}
For any polynomial $p$, if a problem $\Pi$ is $\NPO{}$-PB-hard, $\Pi$ has no $O(p(n))$-maintenance algorithm that solves $\Pi$ within an approximation factor of $O(n^{1-\varepsilon})$ for any $\varepsilon > 0$ unless $\Poly{} = \NP{}$ where $n$ is the size of the input.
\end{corollary}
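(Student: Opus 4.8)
The plan is to mirror the argument of the preceding theorem and its first corollary, replacing the exact/parameterized setting with an approximation-preserving one. The contradiction will rest on the standard fact that $\NPO{}$-PB-hardness carries strong inapproximability with it: under the approximation-preserving reductions used to define $\NPO{}$-PB-hardness, an $\NPO{}$-PB-hard problem inherits the inapproximability of the hardest problems in $\NPO{}$-PB, which cannot be approximated within a factor $O(n^{1-\varepsilon})$ for any $\varepsilon > 0$ unless $\Poly{} = \NP{}$. I would invoke this as the known external fact and then show that a suitable maintenance algorithm would yield exactly such a forbidden polynomial-time approximation.

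First I would assume, for contradiction, that $\Pi$ admits an $O(p(n))$-maintenance algorithm $\mathcal{A}$ that keeps every maintained solution within an approximation factor $O(n^{1-\varepsilon})$ of optimal. Given an arbitrary instance $G$ of $\Pi$, I would reuse the unrolling construction of the theorem: take $G_{0}$ to be the edgeless graph on $\Card{V(G)}$ vertices, whose trivial optimal solution $S_{0}$ is in particular within the required factor, and add the edges of $G$ one at a time to form the sequence $G_{0}, \ldots, G_{\Card{E(G)}} = G$, each consecutive pair at editing distance $1$. Applying $\mathcal{A}$ along this sequence yields solutions $S_{1}, \ldots, S_{\Card{E(G)}}$, and since the final graph is $G$ itself, the approximation guarantee delivers $S_{\Card{E(G)}}$ as an $O(n^{1-\varepsilon})$-approximate solution for $G$. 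The running time is polynomial for the same reason as before: there are at most $\Card{E(G)} \le n^{2}$ edge-addition steps, each costing $O(p(n))$ per vertex over $n$ vertices, for an overall bound of $O(n^{3}p(n))$. We thus obtain a polynomial-time $O(n^{1-\varepsilon})$-approximation algorithm for $\Pi$, contradicting the inapproximability of $\NPO{}$-PB-hard problems unless $\Poly{} = \NP{}$.

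The main obstacle is conceptual rather than computational: one must be confident that the maintenance algorithm's promise to \emph{preserve the quality of the solution} genuinely means each $S_{i}$ stays within the $O(n^{1-\varepsilon})$ factor for $G_{i}$, so that the guarantee composes over all $\Card{E(G)}$ steps without the factor degrading along the way. Once this compositional reading is adopted --- as it is implicitly in the theorem and its first corollary --- the remaining steps are routine, and the argument is identical in spirit to the exact-optimization case, with \emph{polynomial time} in place of \FPT{} and the $\NPO{}$-PB inapproximability bound in place of the $\W[t]=\FPT{}$ assumption.
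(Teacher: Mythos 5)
Your proposal is correct and follows exactly the argument the paper intends: the paper proves this corollary by remarking that it follows ``by the same argument'' as the main theorem, namely unrolling from the trivial edgeless instance $G_{0}$, applying the maintenance algorithm over the at most $n^{2}$ edge additions for a total cost of $O(n^{3}p(n))$, and contradicting the known $O(n^{1-\varepsilon})$-inapproximability of $\NPO{}$-PB-hard problems unless $\Poly{}=\NP{}$. Your explicit attention to the compositional reading of ``preserves the quality of the solution'' is a reasonable clarification of what the paper leaves implicit, but it does not change the substance of the argument.
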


If, given a parameterized problem with parameter $k$, the bound is $O(f(k)p(n))$ where $f$ is a computable function, $p$ is a polynomial in $n$, the size of the input, we denote the associated bounded maintenance algorithm as an ($r$-local) $\FPT{}$-maintenance algorithm. The complexity results given in Section~\ref{sec:evolving_graphs} then give the following results.

\begin{theorem}
\textsc{$g(k)$-Approx-Vertex Deletion to Regular Subgraph} has no $\FPT{}$-maintenance algorithm unless $\W[1]=\FPT{}$.
\end{theorem}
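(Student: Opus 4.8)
The plan is to combine the main Theorem (which says $\W[t]$-hardness of a parameterized problem rules out an $O(f(k)p(n))$-maintenance algorithm unless $\W[t]=\FPT$) with a $\W[1]$-hardness result for the parameterized approximation problem \textsc{$g(k)$-Approx-Vertex Deletion to Regular Subgraph}. The general Theorem is already proved in the excerpt, so the whole burden here is to establish that this particular approximation problem is $\W[1]$-hard for some computable slack function $g$; once that is in hand, the maintenance-algorithm nonexistence follows immediately by invoking the Theorem with $t=1$. Thus I would phrase the proof as: by the complexity results of Section~\ref{sec:evolving_graphs}, \textsc{$g(k)$-Approx-Vertex Deletion to Regular Subgraph} is $\W[1]$-hard; applying the main Theorem with $t=1$ then yields that it has no $\FPT$-maintenance algorithm unless $\W[1]=\FPT$.

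First I would confirm that the problem is well-defined as a parameterized approximation problem: the objective is to delete as few vertices as possible so that the remaining graph is $r$-regular, the parameter $k$ is the size of the deletion set, and an approximate solution of value $g(k)$ must be produced whenever a solution of value $k$ exists. The key step is to verify that the $\W[1]$-hardness promised in Section~\ref{sec:evolving_graphs} indeed holds for the \emph{approximation} version, not merely the exact version — that is, that no $\FPT$ algorithm can even distinguish instances with a $k$-vertex deletion set from those needing more than $g(k)$ deletions. This is exactly the content of a parameterized approximation hardness reduction, and I would rely on it as the cited result.

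Having secured the $\W[1]$-hardness, the remaining work is purely a matter of matching the hypotheses of the main Theorem. I would note that the main Theorem is stated for any $\W[t]$-hard problem and any computable $f$ and polynomial $p$; the construction in its proof (growing $G$ edge-by-edge from the empty graph and feeding the maintenance algorithm) preserves solution quality, so it applies verbatim to the approximation setting where ``quality'' means the approximation ratio rather than optimality. Consequently a $g(k)$-approximate $\FPT$-maintenance algorithm would yield an $\FPT$-time $g(k)$-approximation for the static problem, contradicting its $\W[1]$-hardness.

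The main obstacle is entirely in the cited ingredient rather than in this deduction: establishing that \textsc{Vertex Deletion to Regular Subgraph} is hard to even \emph{approximate} within factor $g(k)$ in $\FPT$ time is substantially harder than proving plain $\W[1]$-hardness of the exact problem, since one must build a reduction with a gap that rules out approximate solutions. I would therefore expect the bulk of the genuine difficulty to lie in the parameterized approximation hardness result of Section~\ref{sec:evolving_graphs}, while the present theorem is a short corollary obtained by feeding that result into the already-established maintenance Theorem.
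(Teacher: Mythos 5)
Your proposal matches the paper's own route exactly: the paper proves the general theorem that $\W[t]$-hardness rules out an $O(f(k)p(n))$-maintenance algorithm, proves separately (via the reduction from \textsc{Strongly Regular Multicolored Clique}) that \textsc{$g(k)$-Approx-Vertex Deletion to Regular Subgraph} is $\W[1]$-hard, and obtains this theorem as the immediate combination of the two. Your additional remark that the maintenance theorem's edge-by-edge construction carries over to the approximation setting (with ``quality'' read as approximation ratio) is a point the paper leaves implicit, so if anything you are slightly more careful than the original.
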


\begin{theorem}
\textsc{$g(k)$-Approx-Deletion to Regular Subgraph} has no $\FPT{}$-maintenance algorithm unless $\W[1]=\FPT{}$.
\end{theorem}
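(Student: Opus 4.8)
The plan is to obtain this statement as a specialisation of the general maintenance theorem proved above, so that the real work lies entirely in a parameterized hardness result. Concretely, it suffices to show that the parameterized approximation problem \textsc{$g(k)$-Approx-Deletion to Regular Subgraph} --- find a set of edge and vertex deletions of size within a factor $g(k)$ of the minimum needed to render $G$ regular --- is itself $\W[1]$-hard. Once this is in hand, I would instantiate the general theorem with $t = 1$: a bounded $\FPT{}$-maintenance algorithm for this problem is, by the definition of a maintenance algorithm, one that preserves the $g(k)$-approximation quality, and the edge-addition construction in the proof of that theorem would then turn it into an $O(n^{3} f(k) p(n))$ algorithm deciding the approximation problem, forcing $\W[1] = \FPT{}$.

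I would establish the required $\W[1]$-hardness in parallel with the vertex-deletion variant of the preceding theorem, reducing from a convenient $\W[1]$-hard source such as \textsc{Clique} (routed through \textsc{Multicolored Clique} to control the instance structure). The gadgets would be designed so that each intended deletion corresponds to the selection of a clique vertex, making the optimal deletion cost on \Yes{}-instances a fixed computable function of $k$. The one genuinely new feature relative to the vertex-deletion theorem is that the solver may now also delete edges; I would therefore have to show that no combination of edge deletions can beat the intended solution --- that is, design the gadgets so that restoring regularity locally is always at least as expensive via edges as via the planned vertex deletions, thereby carrying the extremal value over unchanged from the vertex-deletion setting.

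The main obstacle is making the reduction \emph{gap-creating} rather than merely hardness-preserving, since it is the $g(k)$-\emph{approximation} that must be shown hard. I would engineer a multiplicative separation between the optimal deletion cost on \Yes{}- and \No{}-instances that strictly exceeds $g(k)$, while keeping the parameter bounded by a computable function of $k$; any purported $g(k)$-approximation would then fall on the correct side of the gap and decide the source instance, contradicting its $\W[1]$-hardness. A final routine check is that the maintenance framework applies cleanly: the completely disconnected graph $G_{0}$ on $\Card{V(G)}$ vertices is $0$-regular, so the empty deletion set is a trivially optimal --- hence trivially $g(k)$-approximate --- starting solution $S_{0}$, and adding the edges of $G$ one at a time meets the editing-distance-$1$ hypothesis, so the assumed maintenance algorithm can be driven from $G_{0}$ up to $G$ exactly as in the general theorem.
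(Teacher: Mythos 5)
Your proposal matches the paper's proof: the paper also obtains this theorem by combining the general maintenance theorem (instantiated at $t=1$, with the disconnected graph as the trivially regular starting instance) with a $\W[1]$-hardness proof for the parameterized approximation problem, where the hardness comes from a gap-creating, multicolored-clique-style reduction (specifically from \textsc{Strongly Regular Multicolored Clique}, adapting the construction of Mathieson and Szeider, with the degree-adjustment gadgets inflated beyond $g(k')$ so that only the intended deletions are affordable), followed by exactly your observation that edge deletions confer no advantage, so the vertex-deletion hardness carries over to the combined vertex-and-edge-deletion problem. The only cosmetic difference is your choice of source problem, since the paper's strongly regular variant is what makes the uniform degree bookkeeping in the gadgets go through.
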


\begin{theorem}
\textsc{$g(k)$-Approx-Weighted Degree Constrained Deletion} has no $\FPT{}$-maintenance algorithm unless $\W[1]=\FPT{}$.
\end{theorem}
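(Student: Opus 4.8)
The plan is to derive the theorem from the main hardness-transfer theorem stated above: since that theorem shows any parameterized problem that is $\W[1]$-hard for its parameter $k$ admits no $O(f(k)p(n))$-maintenance algorithm unless $\W[1]=\FPT$, and since the edge-by-edge construction in its proof already covers the search/approximation case (as noted there, the algorithm ``returns $S_{|E(G)|}$ in the case of a search problem''), it suffices to establish that \textsc{$g(k)$-Approx-Weighted Degree Constrained Deletion} is $\W[1]$-hard for the parameterized approximation problem. Concretely, an $\FPT$-maintenance algorithm preserving a $g(k)$-approximate solution, run along the sequence $G_0,\ldots,G_{|E(G)|}$ from the trivial disconnected graph up to the target instance $G$, would yield a $g(k)$-approximation of the full instance in $\FPT$ time, contradicting that hardness. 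Thus the whole burden falls on the $\W[1]$-hardness of the parameterized approximation problem.

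For that hardness, the key observation is that \textsc{Weighted Degree Constrained Deletion} contains \textsc{Deletion to Regular Subgraph} (the problem of the immediately preceding theorem) as a syntactic special case. First I would exhibit the embedding: given an instance asking to delete at most $k$ elements to leave an $r$-regular graph, assign every deletable element unit weight and impose the uniform degree constraint $d(v)=r$ at every vertex $v$. This produces an instance of \textsc{Weighted Degree Constrained Deletion} whose feasible solutions are exactly the sets whose removal yields an $r$-regular graph and whose cost equals the number of deletions. The parameter $k$ is carried over unchanged, so the map is an $\FPT$ reduction.

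The next step is to verify that the embedding preserves the approximation \emph{gap}, not merely the optimum. Because the weights are unit and the constraints coincide, a solution of total weight $w$ in the image instance is precisely a solution of $w$ deletions in the source instance, and conversely; hence a $g(k)$-approximate solution to the \textsc{Weighted Degree Constrained Deletion} instance restricts to a $g(k)$-approximate solution to the \textsc{Deletion to Regular Subgraph} instance. Consequently an $\FPT$-time $g(k)$-approximation for the former would give one for the latter. Since the complexity results of Section~\ref{sec:evolving_graphs} establish that \textsc{Deletion to Regular Subgraph} admits no such parameterized approximation unless $\W[1]=\FPT$, the same holds for \textsc{Weighted Degree Constrained Deletion}, which is therefore $\W[1]$-hard for the parameterized approximation problem; combining this with the first paragraph yields the theorem.

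The main obstacle I anticipate lies in this last verification: confirming that the reduction is genuinely gap-preserving rather than only optimum-preserving. One must check that the constraint language of \textsc{Weighted Degree Constrained Deletion} admits the uniform constraint $d(\cdot)=r$ as a legal input, and, more delicately, that neither the added weighting nor the richer constraint vocabulary introduces spurious cheaper feasible solutions in the image instance; any such slack could shrink the gap and break the transfer of the $g(k)$ ratio back to the regular-subgraph instance. Once the embedding is confirmed faithful in this sense, the remaining steps are routine.
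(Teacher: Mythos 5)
Your proposal is correct and follows essentially the same route as the paper: the paper likewise combines its general maintenance-hardness theorem with the $\W[1]$-hardness of \textsc{$g(k)$-Approx-Weighted Degree Constrained Deletion}, which it obtains exactly as you do --- by observing that the problem contains \textsc{$g(k)$-Approx-Deletion to Regular Subgraph} as a subproblem (unit weights, uniform degree constraint $\delta(v)=\{r\}$), so its hardness transfers. Your extra care about gap preservation under the embedding is sound but unproblematic for precisely the reason you identify: with unit weights and coinciding constraints the feasible solutions and their costs are in exact correspondence.
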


Similar statements can be made regarding \textsc{$c$-Add-Approx-Dominating Set} and \textsc{$g(k)$-Approx-Independent Dominating Set} using their respective approximation hardness results:

\begin{lemma}[\cite{DowneyFellowsMcCartinRosamond08}]
\sloppypar\textsc{$c$-Add-Approx-Dominating Set} has no $\FPT{}$-maintenance algorithm unless $\W[2]=\FPT{}$.
\end{lemma}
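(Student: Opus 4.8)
The plan is to obtain the statement as a direct application of the general obstruction theorem, namely the Theorem asserting that a $\W[2]$-hard problem (indeed any $\W[t]$-hard problem with $t \geq 1$) admits no $O(f(k)p(n))$-maintenance algorithm unless $\W[t]=\FPT{}$. The required hardness is supplied by Downey, Fellows, McCartin and Rosamond~\cite{DowneyFellowsMcCartinRosamond08}, who show that, for every fixed constant $c$, computing a dominating set whose size exceeds the optimum by at most $c$ is $\W[2]$-hard; in the present notation, \textsc{$c$-Add-Approx-Dominating Set} is $\W[2]$-hard. Since $2 \geq 1$, the hypothesis of the obstruction theorem holds with $t=2$, and it remains only to confirm that its reduction applies unchanged to the additive-approximation version.

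Concretely, I would take an instance $G$ of \textsc{$c$-Add-Approx-Dominating Set}, let $G_0$ be the edgeless graph on $V(G)$, and observe that its unique optimal dominating set is $S_0 = V(G)$, since an isolated vertex dominates nothing but itself. Thus $\Card{S_0}$ equals the optimum for $G_0$, so $S_0$ is a valid $c$-additive solution with which to seed the sequence. Listing the edges of $G$ and adding them one at a time yields $G_0, \ldots, G_{|E(G)|} = G$ with consecutive editing distance $1$. Running the hypothesised $\FPT{}$-maintenance algorithm $\mathcal{A}$ along this sequence, preservation of solution quality means that whenever $S_i$ dominates $G_i$ within $c$ of optimal, the produced $S_{i+1}$ dominates $G_{i+1}$ within $c$ of optimal. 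Inducting from the optimal $S_0$, the final $S_{|E(G)|}$ is a $c$-additive approximate dominating set for $G$ itself, computed in total time $O(n^{3}f(k)p(n))$. Hence \textsc{$c$-Add-Approx-Dominating Set} would lie in $\FPT{}$, contradicting its $\W[2]$-hardness unless $\W[2]=\FPT{}$.

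I do not expect a genuine obstacle, as the argument merely transcribes the earlier reduction; the one point deserving care is that the \emph{approximate} problem still possesses a trivial seed solution, which it does because the edgeless graph pins down $S_0 = V(G)$ as optimal and hence vacuously $c$-additive. The identical template proves the companion claim for \textsc{$g(k)$-Approx-Independent Dominating Set}: one substitutes its $g(k)$-factor parameterized approximation hardness for the $\W[2]$-hardness above, notes that the edgeless graph again forces the all-vertices set as the unique (hence optimal) independent dominating set, and runs the same edge-addition argument.
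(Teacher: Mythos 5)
Your proposal is correct and matches the paper's intended argument exactly: the paper derives this lemma by combining the $\W[2]$-hardness of \textsc{$c$-Add-Approx-Dominating Set} from~\cite{DowneyFellowsMcCartinRosamond08} with its general theorem that $\W[t]$-hard problems admit no $O(f(k)p(n))$-maintenance algorithm, which is precisely the edge-addition sequence argument you reproduce. Your extra care in checking that the edgeless graph supplies a trivially optimal seed solution ($S_0 = V(G)$) is a detail the paper leaves implicit but does not change the approach.
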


\begin{lemma}[\cite{DowneyFellowsMcCartinRosamond08}]
\textsc{$g(k)$-Approx-Independent Dominating Set} has no $\FPT{}$-maintenance algorithm unless $\W[1]=\FPT{}$.
\end{lemma}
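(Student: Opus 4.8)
The plan is to obtain this lemma as a direct application of the general framework rather than through a fresh reduction. The input is the parameterized approximation hardness of the underlying problem: Downey, Fellows, McCartin and Rosamond~\cite{DowneyFellowsMcCartinRosamond08} establish that \textsc{Independent Dominating Set} admits no $\FPT$-time $g(k)$-approximation algorithm unless $\W[1]=\FPT$; equivalently, the parameterized approximation problem \textsc{$g(k)$-Approx-Independent Dominating Set} is $\W[1]$-hard. I would take this statement as the hypothesis to be fed into the machinery.

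First I would invoke the approximate analogue of the main Theorem---the implication announced in the abstract and developed in Section~\ref{sec:evolving_graphs}, that $\W[1]$-hardness of a parameterized approximation problem forecloses an $\FPT$-maintenance algorithm preserving the corresponding approximation ratio. Its engine is the same chaining construction used in the proof of the main Theorem: from an instance $(G,k)$ one builds the completely disconnected graph $G_0$ on $V(G)$, whose optimal (hence trivially $g(k)$-approximate) solution $S_0$ is known, and then inserts the edges of $G$ one at a time to obtain a sequence $G_0,\ldots,G_{\Card{E(G)}}=G$ in which consecutive graphs have editing distance $1$.

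Next I would run the hypothesized maintenance algorithm along this sequence. Since each step is quality-preserving and $S_0$ already meets the ratio, an induction over the sequence yields that the final solution $S_{\Card{E(G)}}$ is a $g(k)$-approximate independent dominating set of $G$. As the sequence has $O(n^2)$ steps and each step costs $O(f(k)p(n))$ per vertex, the total running time is $\FPT$ in $k$, producing an $\FPT$-time $g(k)$-approximation algorithm for \textsc{Independent Dominating Set}---contradicting the cited hardness unless $\W[1]=\FPT$.

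The main obstacle, and essentially the only point demanding care, is the induction that the ratio is genuinely maintained across every edge insertion. The definition of a maintenance algorithm only promises to ``preserve the quality of the solution'' from one graph to the next, so I would need to verify that the relevant notion of quality for \textsc{$g(k)$-Approx-Independent Dominating Set}---the factor $g(k)$ measured against the \emph{current} optimum---survives each step even though the optimum value itself shifts as edges are added. Pinning down this step-by-step transfer of the ratio, together with confirming that the trivial base solution on the empty graph is admissible, is what makes the chaining go through; the remaining running-time bookkeeping is routine.
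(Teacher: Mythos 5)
Your proposal is correct and takes essentially the same route as the paper: the lemma is obtained there by combining the $\W[1]$-hardness of the parameterized approximation problem from \cite{DowneyFellowsMcCartinRosamond08} with the chaining construction of the main theorem (trivial solution on the edgeless graph $G_{0}$, edge-by-edge insertion, $O(n^{3}f(k)p(n))$ total time yielding an $\FPT{}$ approximation algorithm and hence a contradiction). The only difference is that you make explicit the step-by-step preservation of the $g(k)$ ratio, which the paper leaves implicit in the phrase ``preserves the quality of the solution.''
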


Independently of the results above, we observe that if the bound on the maintenance algorithm is further restricted, then in certain cases no approximation ratio can be maintained. More precisely the approximation ratio will be a function of the number of iterations of the maintenance algorithm, up to trivial bounds on the quality of the solution. Given a (w.l.o.g) minimization problem $\Pi$ in $\normalfont{NPO}$ with the following properties:

\begin{enumerate}
\item The decision counterpart of $\Pi$ is $\normalfont{NP}$-hard.
\item $\Pi$ has a $O(1)$-maintenance algorithm $\mathcal{A}$.
\end{enumerate}

We make the following claims:

\begin{lemma}
There is a step in $\mathcal{A}$ that is \emph{divergent}.
\end{lemma}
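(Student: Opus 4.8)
The plan is to argue by contradiction, reusing the edge-addition construction from the proof of the theorem above. I take a \emph{divergent} step to be a transition at which the quality of the maintained solution deteriorates, i.e.\ the ratio $\Card{S_{i}}/\mathrm{OPT}(G_{i})$ strictly increases; a non-divergent step is one that preserves (or improves) this ratio. I will show that if \emph{every} step of $\mathcal{A}$ were non-divergent, then $\mathcal{A}$ could be turned into a polynomial-time exact algorithm for $\Pi$, contradicting property~(1) (the $\NP{}$-hardness of the decision counterpart) unless $\Poly{} = \NP{}$.

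First I would set up the instance sequence exactly as before. Given an arbitrary instance $G$ of $\Pi$, let $G_{0}$ be the completely disconnected graph on $\Card{V(G)}$ vertices, whose trivial optimal solution $S_{0}$ realizes the optimum and hence has ratio $1$. Adding the edges of $G$ one at a time yields a sequence $G_{0}, \dots, G_{\Card{E(G)}} = G$ of consecutive editing distance $1$, and applying the $O(1)$-maintenance algorithm $\mathcal{A}$ at each transition produces solutions $S_{0}, \dots, S_{\Card{E(G)}}$. Because each of the at most $\Card{E(G)} = O(n^{2})$ transitions costs only $O(n)$ total work (constant per vertex, summed over $n$ vertices), the whole process runs in $O(n^{3})$ time, as in the theorem.

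Next I would track the approximation ratio along this sequence. Under the assumption that no step is divergent, the ratio $\Card{S_{i}}/\mathrm{OPT}(G_{i})$ never increases; as a minimization ratio it is always at least $1$, and it equals exactly $1$ on $G_{0}$. Hence it remains $1$ throughout, so in particular $S_{\Card{E(G)}}$ is an \emph{optimal} solution for $G = G_{\Card{E(G)}}$. Since $G$ was arbitrary, this is a polynomial-time optimal algorithm for $\Pi$, which is impossible by property~(1); therefore at least one step of $\mathcal{A}$ must be divergent.

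The hard part will be making the notion of a divergent step precise enough to drive the contradiction while still covering degenerate instances --- in particular the prefix of the sequence on which $\mathrm{OPT}(G_{i}) = 0$ (before any added edge forces a nonempty solution), where the literal ratio is undefined. I would handle this by phrasing non-divergence on that prefix directly as preservation of optimality (equivalently, zero additive error) and only switching to the multiplicative ratio once $\mathrm{OPT}(G_{i}) > 0$; the ``trivial bounds on the quality of the solution'' caveat in the statement is exactly what licenses ignoring the vanishing-optimum prefix and measuring divergence against the optimum thereafter. The remaining ingredients --- the edge-addition construction and the $O(n^{3})$ running-time bound --- are routine once the $O(1)$ locality of $\mathcal{A}$ is invoked, so the entire weight of the argument rests on the definitional bookkeeping around the ratio.
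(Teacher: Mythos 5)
Your proof is correct and is essentially the paper's argument: the paper disposes of this lemma in one line (``follows immediately from the $\NP{}$-hardness of the decision counterpart''), and the intended expansion is exactly your contradiction --- a never-divergent $O(1)$-maintenance algorithm, run along the edge-addition sequence of the preceding theorem, would solve the $\NP{}$-hard problem optimally in polynomial time. The only discrepancy is definitional: the paper defines a divergent step by cases on the optimum (it decreases while the maintained solution does not, or it is unchanged while the maintained solution grows) rather than via the ratio, but both formalize ``falling behind the optimum,'' and your explicit treatment of the $\mathrm{OPT}=0$ prefix is more careful than anything in the paper.
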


Where \emph{divergent} means that if the size of the optimal solution decreases, the size of the solution given by $\mathcal{A}$ does not decrease, and similarly if the size of the optimal solution does not change, the size of the solution given by $\mathcal{A}$ increases.

\begin{proof}
Follows immediately from the $\normalfont{NP}$-hardness of the decision counterpart of $\Pi$.
\end{proof}

Let $\gamma_{i}$ be the size of the solution given as input from instance $I_{i}$ (perhaps as the result of a previous application of $\mathcal{A}$) and $\gamma^{*}_{i}$ be the size of the optimal solution for $I_{i}$. Let $I_{i+1}$ be the instance after alterations and $\gamma_{i+1}$ be the size of the solution given by $\mathcal{A}$ and $\gamma_{i+1}^{*}$ be the size of the optimal solution.

\begin{lemma}
Given an approximation ratio $A$ and an instance with solution of size $\gamma_{i} = A\cdot\gamma_{i}^{*}$, then $\mathcal{A}$ cannot guarantee that $\gamma_{i+1} \leq A\cdot\gamma_{i+1}^{*}$.
\end{lemma}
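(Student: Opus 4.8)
The plan is to derive the failure directly from the divergent-step lemma established immediately above. The intuition is that a divergent step is precisely the kind of move that pushes a tight solution over the approximation threshold: if $\mathcal{A}$ begins with a solution that already sits exactly at ratio $A$, then a single divergent edit leaves no slack to absorb the mismatch between how the optimum moves and how $\mathcal{A}$'s solution moves. Concretely, I would set up the adversarial configuration so that immediately before the guaranteed divergent step the supplied solution is tight, i.e. $\gamma_{i} = A\cdot\gamma_{i}^{*}$. Applying the edit operation that realizes the divergent step yields $I_{i+1}$, and the definition of divergence then forces one of its two cases.

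Next I would dispatch the two cases. In the first case the optimum strictly decreases, $\gamma_{i+1}^{*} < \gamma_{i}^{*}$, while $\mathcal{A}$'s solution does not decrease, so $\gamma_{i+1} \geq \gamma_{i}$; combining these with tightness gives
\[
\gamma_{i+1} \geq \gamma_{i} = A\cdot\gamma_{i}^{*} > A\cdot\gamma_{i+1}^{*}.
\]
In the second case the optimum is unchanged, $\gamma_{i+1}^{*} = \gamma_{i}^{*}$, while $\mathcal{A}$'s solution strictly increases, so $\gamma_{i+1} > \gamma_{i}$, and therefore
\[
\gamma_{i+1} > \gamma_{i} = A\cdot\gamma_{i}^{*} = A\cdot\gamma_{i+1}^{*}.
\]
In both cases $\gamma_{i+1} > A\cdot\gamma_{i+1}^{*}$, so the approximation guarantee is broken after a single step, which is exactly the failure the lemma asserts.

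The step I expect to be the main obstacle is the first one: justifying that the tight configuration $\gamma_{i} = A\cdot\gamma_{i}^{*}$ can actually be made to coincide with a divergent step. The divergent-step lemma only guarantees that such a step exists somewhere in the execution of $\mathcal{A}$, not that it occurs precisely when the incoming solution sits at ratio $A$. To close this gap I would exploit the fact that $\mathcal{A}$ is required to honour the bound \emph{regardless} of the incoming solution, so we are free to feed it the worst admissible input that still respects the ratio, namely one saturating $\gamma_{i} = A\cdot\gamma_{i}^{*}$; the arithmetic above then applies verbatim. This is the only place where the $\NPO{}$ minimization structure and the freedom to choose the input solution $S_{i}$ are genuinely used, and it is also where one must be careful that such tight instances exist --- a point that follows from the $\NP{}$-hardness underlying the divergent step in the first place.
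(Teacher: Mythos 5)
Your proposal is correct and takes essentially the same approach as the paper: the paper also applies a divergent step to the tight instance $\gamma_{i} = A\cdot\gamma_{i}^{*}$, works out the arithmetic for the case $\gamma_{i+1} = \gamma_{i}+1$, $\gamma_{i+1}^{*} = \gamma_{i}^{*}$ (obtaining the ratio $A + \frac{1}{\gamma_{i}^{*}}$), and dismisses the other divergence case as similar. Your explicit two-case inequality treatment, and your remark that the adversary may feed $\mathcal{A}$ the worst admissible tight solution, are just slightly more careful renderings of the same argument.
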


\begin{proof}
Let the change from $I_{i}$ to $I_{i+1}$ be such that it induces $\mathcal{A}$ to take a divergent step. Assume without loss of generality that $\gamma_{i+1} = \gamma_{i}+1$ and $\gamma_{i+1}^{*} = \gamma_{i}^{*}$ Then
\begin{equation*}
\frac{\gamma_{i+1}}{\gamma_{i+1}^{*}} = \frac{\gamma_{i}+1}{\gamma_{i}^{*}} = \frac{A\cdot\gamma_{i}^{*}+1}{\gamma_{i}^{*}} = A+\frac{1}{\gamma_{i}^{*}}
\end{equation*}

It is easy to see that a similar result occurs with the other possibilities for divergence.
\end{proof}

Given a problem $\Pi$ with a \emph{long divergent sequence} this problem may be amplified. A \emph{long divergent sequence} is a set of instances where the changes between each instance induce divergent steps in $\mathcal{A}$. For example given the completely disconnected graph and the problem \textsc{Dominating Set}, the initial solution is to take all vertices in the dominating set (which is optimal), then one by one we add edges to obtain a star graph. Using an adversarial approach we add the edges such that the centre of the star is not in the dominating set, but all other vertices are, taking $n-1$ vertices where only $1$ is needed.

\begin{lemma}
Given a long divergent sequence of length $d$, we have $\gamma_{d+1} / \gamma_{d+1}^{*} \geq 1+d/\gamma^{*}_{d+1}$.
\end{lemma}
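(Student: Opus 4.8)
The plan is to argue by tracking the \emph{absolute gap} $\Delta_i := \gamma_i - \gamma_i^*$ rather than the ratio directly, and to show that this gap grows additively along the divergent sequence. Since $\gamma_i^*$ is the optimum of a minimization problem while $\gamma_i$ is the size of a feasible solution handled by $\mathcal{A}$, we always have $\gamma_i \geq \gamma_i^*$, so $\Delta_i \geq 0$ at every instance. The claimed ratio bound will then follow from a single algebraic rearrangement once we know that $\Delta_{d+1} \geq d$.

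First I would establish the core claim: a single divergent step increases the gap by at least one, i.e. $\Delta_{i+1} \geq \Delta_i + 1$. This is where the definition of \emph{divergent} does all the work, and I would split into its two cases. If the step is of the type where the optimum decreases, then $\gamma_{i+1}^* \leq \gamma_i^* - 1$ while the solution returned by $\mathcal{A}$ does not decrease, so $\gamma_{i+1} \geq \gamma_i$; subtracting gives $\Delta_{i+1} = \gamma_{i+1} - \gamma_{i+1}^* \geq \gamma_i - (\gamma_i^* - 1) = \Delta_i + 1$. If instead the optimum is unchanged, then $\gamma_{i+1}^* = \gamma_i^*$ while $\gamma_{i+1} \geq \gamma_i + 1$, yielding the same conclusion. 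Integrality of solution sizes is what turns ``does not decrease'' and ``increases'' into the concrete increments $\geq 0$ and $\geq 1$ used here.

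Next I would telescope over the $d$ divergent steps of the sequence. Each of the $d$ transitions is, by hypothesis, divergent, so applying the per-step claim $d$ times together with $\Delta_{\mathrm{first}} \geq 0$ yields $\Delta_{d+1} \geq \Delta_{\mathrm{first}} + d \geq d$, that is $\gamma_{d+1} \geq \gamma_{d+1}^* + d$. Dividing by $\gamma_{d+1}^*$ then gives
\[
\frac{\gamma_{d+1}}{\gamma_{d+1}^*} \;=\; \frac{\gamma_{d+1}^* + (\gamma_{d+1} - \gamma_{d+1}^*)}{\gamma_{d+1}^*} \;\geq\; \frac{\gamma_{d+1}^* + d}{\gamma_{d+1}^*} \;=\; 1 + \frac{d}{\gamma_{d+1}^*},
\]
which is exactly the claimed inequality.

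The conceptual subtlety, and the step I would take most care over, is the realization that it is the \emph{additive} gap, not the approximation ratio, that behaves monotonically under divergent steps: the earlier single-step lemma only tracks the ratio and so cannot simply be iterated, whereas the gap telescopes cleanly. The only other point requiring attention is bookkeeping at the endpoints of the sequence, namely confirming that $\Delta \geq 0$ holds at the initial instance (automatic, as any feasible solution is at least as large as the optimum) and that the length-$d$ sequence corresponds to exactly $d$ applications of the per-step claim, landing on index $d+1$. The dominating-set star construction already exhibited serves as a sanity check, as the bound is there met with equality.
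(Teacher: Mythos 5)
Your proof is correct, and its core idea is the same as the paper's: each divergent step forces the algorithm's solution to grow by at least one \emph{relative to the optimum}, accumulate this over the $d$ steps, then divide. Where you differ is in the bookkeeping, and your version is in fact the more careful one. The paper formalizes the per-step statement as two separate cumulative facts, $\gamma_{d+1} \geq \gamma_{1} + d$ and $\gamma^{*}_{d+1} \leq \gamma^{*}_{1}$; but under the definition of divergence a step of the first kind (optimum decreases, solution merely does not decrease) does not increase $\gamma$ at all, so $\gamma_{d+1} \geq \gamma_{1}+d$ need not hold for a mixed sequence --- only the gap $\Delta_i = \gamma_i - \gamma_i^{*}$ is guaranteed to grow at every step. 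The paper's algebra also silently uses the initial solution being optimal ($\gamma_{1} = \gamma_{1}^{*}$, inherited from the dominating-set example), and its final link $1 + d/\gamma_{1}^{*} \geq 1 + d/\gamma^{*}_{d+1}$ runs the wrong way given its own premise $\gamma^{*}_{d+1} \leq \gamma^{*}_{1}$ (it would need the optimum not to have decreased). Your invariant $\Delta_{i+1} \geq \Delta_i + 1$, proved by the two-case analysis and telescoped from $\Delta_{1} \geq 0$, handles both kinds of divergent step uniformly, requires no optimality of the starting solution, and delivers $\gamma_{d+1} \geq \gamma^{*}_{d+1} + d$ directly, from which the stated ratio bound is immediate. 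The only hypothesis you lean on beyond the paper's is integrality (so that ``decreases'' and ``increases'' mean by at least $1$), which you flag explicitly and which is implicit in the paper as well.
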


\begin{proof}
At each step we are forced to increase the size of the solution relative to the optimal by at least one. Then by step $d$ (instance $I_{d+1}$) we have $\gamma_{d+1} \geq \gamma_{1} + d$ however $\gamma^{*}_{d+1} \leq \gamma^{*}_{1}$.

Therefore we at least have:
\begin{equation*}
\frac{\gamma_{d+1}}{\gamma_{d+1}^{*}} \geq \frac{\gamma_{1}+d}{\gamma_{1}^{*}} =  \frac{\gamma_{1}^{*}+d}{\gamma_{1}^{*}}
 =  1+\frac{d}{\gamma_{1}^{*}}
 \geq  1 + \frac{d}{\gamma_{d+1}^{*}}
\end{equation*}
\end{proof}

Then for the dominating set example we have the approximation ratio of $O(n)$. In general, given a problem with a long divergent sequence of length $O(n)$, we obtain a similar approximation ratio.

\section{Problems in Dynamic Graphs}
\label{sec:evolving_graphs}

Given a graph, the problem of obtaining an $r$-regular subgraph with a minimum number of excluded vertices (known variously as \textsc{$r$-Regular Subgraph}, \textsc{$k$-Almost $r$-Regular Subgraph} and \textsc{Vertex Deletion to Regular Subgraph}) is $\W[1]$-hard~\cite{MathiesonSzeider12}. The property of having a bounded degree or regular graph is particularly interesting for routing purposes. Although it is $\W[1]$-hard, for a dynamic graph, an approximation would be sufficient as we expect the graph to change rapidly, so computing a reasonable solution quickly is more effective than computing an exact solution slowly. In this section however we show that this problem has no parameterized approximation unless $\W[1]=\FPT{}$.

Formally we define the problem as:

\pcproblem{$g(k)$-Approx-Vertex Deletion to Regular Subgraph}{A graph $G=(V,E)$, integers $k$ and $r$.}{$k$.}{Is there a set $V'\subset V$ with $\Card{V'}\leq k$ such that the subgraph $G'=G[V\setminus V']$ is $r$ regular?}

This problem has many possible generalizations, we note two in particular:

\pcproblem{$g(k)$-Approx-Deletion to Regular Subgraph}{A graph $G=(V,E)$, integers $k$ and $r$.}{$k$.}{Is there a set $D\subset V\cup E$ with $\Card{D}\leq k$ such that the subgraph $G'=(V\setminus D, E\setminus D)$ is $r$ regular?}

\pcproblem{$g(k)$-Approx-Weighted Degree Constrained Deletion}{A graph $G=(V,E)$, integers $k$ and $r$, a weight function $\rho: V\cup E \rightarrow [1,k+1]$ and a degree list function $\delta: V \rightarrow 2^{\{0,\ldots,r\}}$.}{$k$.}{Is there a set $D\subset V\cup E$ with $\sum_{d \in D} \rho(d) \leq k$ such that for every vertex $v$ in the subgraph $G'=(V\setminus D, E\setminus D)$ we have $\sum_{e \in N^{G'}(v)}\rho(e) \in \delta(v)$?}

The reduction we use is from the following problem:

\pcproblem{Strongly Regular Multicolored Clique}{A graph $G=(V,E)$ with $V= \uplus_{i \in [k]} V_{i}$ such that $\Card{V_{i}} = s$ and for all $v \in V_{i}$ we have $d(v)|_{V_{j}} = d$ for all $i$, $j$, and an integer $k$.}{$k$.}{Is there a clique $V'$ of size $k$ such that $V'$ has one vertex from each $V_{i}$?}

In the remainder of this section we present the hardness reduction that demonstrates that \textsc{$g(k)$-Approx-Vertex Deletion to Regular Subgraph} and subsequently the more general problems are $\W[1]$-hard, and therefore have no $\FPT{}$-maintenance algorithms. We then prove that problems that are fixed-parameter tractable may have an approximate maintenance algorithm by proving that this is indeed the case for \textsc{Vertex Cover}.

\subsection{Hardness Results}

\begin{lemma}
\textsc{$g(k)$-Approx-Vertex Deletion to Regular Subgraph} is $\W[1]$-hard.
\end{lemma}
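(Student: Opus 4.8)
The plan is to give a gap-producing parameterized reduction from \textsc{Strongly Regular Multicolored Clique} (the $\W[1]$-hard source introduced above) to \textsc{Vertex Deletion to Regular Subgraph}. Because the statement concerns the parameterized \emph{approximation} problem, it is not enough to build an instance whose optimum separates \Yes{} from \No{} by a single unit; I must instead construct, from an instance $(G,k)$ with colour classes $V_1,\ldots,V_k$ of size $s$ and uniform inter-class degree $d$, a target instance $(H,k',r)$ with $k'$ a function of $k$ \emph{alone} such that a multicolored clique yields a deletion set of size at most $k'$ making $H$ be $r$-regular, whereas the absence of a clique forces every deletion set that achieves $r$-regularity to have size exceeding $g(k')\cdot k'$. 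The strong regularity of $G$ is the lever that makes this possible: since every vertex already sees exactly $d$ neighbours in each other class, the degree profile of the untouched part of the construction is rigidly determined, so I can predict precisely which deletions restore regularity and rule out cheap alternative repairs.

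First I would attach to each colour class $V_i$ a \emph{selection gadget} that forces the deletion set to commit to exactly one vertex of $V_i$ (its clique representative) at a bounded deletion cost, and to each pair $(i,j)$ a \emph{validation gadget} on the edges between $V_i$ and $V_j$ that can be driven to its target degree only when the two chosen representatives are genuinely adjacent in $G$. The regularity target $r$ and the budget $k'$ (on the order of $k+\binom{k}{2}$ deletions, independent of $s$ and $d$) are chosen so that the intended solution --- delete exactly the gadget vertices dictated by a clique --- leaves every surviving vertex, both inside the gadgets and inside the copied classes, with degree precisely $r$. Verifying this forward direction is essentially degree bookkeeping: one checks that the uniform degrees guaranteed by strong regularity, together with the compensating edges built into the gadgets, sum to $r$ at each vertex once the clique-encoding deletions are performed.

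The crux, and the step I expect to be the main obstacle, is the backward direction \emph{with a large gap}. For plain $\W[1]$-hardness one argues only that no deletion set of size $\le k'$ works in the \No{} case; for approximation hardness I must show that no deletion set of size $\le g(k')\cdot k'$ works at all. This forces the gadgets to be \emph{globally fragile}: any selection that is non-committal (touching two vertices of some class) or inconsistent (choosing a non-adjacent pair) must create degree violations that cannot be localised, so that repairing them propagates through the strongly regular backbone and demands a number of further deletions growing with $|V(G)|$ rather than with $k$. Concretely, I would add amplification structures so that a single uncorrected violation cascades into $\Omega(g(k')\cdot k')$ many \emph{independent} degree defects, each removable only by its own deletion; the uniformity supplied by strong regularity is exactly what prevents an adversary from patching many defects with one clever deletion, and it is in balancing this fragility against the clean forward direction that the real work lies. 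Once the cascade is in place the two cases separate by the required factor, and since the reduction runs in time polynomial in $|G|$ with $k'$ bounded by a function of $k$, it is a genuine parameterized gap reduction: any $\FPT{}$ algorithm approximating the problem within $g(k)$ would decide \textsc{Strongly Regular Multicolored Clique} in $\FPT{}$ time, giving $\W[1]=\FPT{}$.
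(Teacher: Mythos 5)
Your proposal follows essentially the same route as the paper: a reduction from \textsc{Strongly Regular Multicolored Clique} with per-class selection gadgets and per-pair edge-validation gadgets, budget $k' = k + \binom{k}{2}$, and approximation hardness obtained by padding the degree-correction structures so that any deletion set respecting the $g(k')$ budget is forced to be exactly the intended clique-encoding solution (the paper realizes your ``amplification structures'' as interlinked $(r+1)$-cliques augmented until their total size exceeds $g(k')$, so touching them cascades past the budget). Your concrete target of $\Omega(g(k')\cdot k')$ independent defects is the right mechanism and matches the paper's argument; just note that your looser phrasing of defects ``growing with $|V(G)|$'' would not suffice on its own when $g(k')$ exceeds the graph size, and that the padded construction makes the reduction FPT-time ($f(k)\cdot\mathrm{poly}(n)$) rather than strictly polynomial in $|G|$.
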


\begin{proof}
The underlying exact problem \textsc{Vertex Deletion to Regular Subgraph} was shown to be $\W[1]$-hard in~\cite{MathiesonSzeider12} with a reduction than can be used with little modification to show hardness for the approximation problem.

The reduction is from \textsc{Strongly Regular Multicolored Clique}. Let $(G,k)$ of \textsc{Strongly Regular Multicolored Clique} where $V(G) = \uplus_{i \in [k]} V_{i}$ is partitioned into $k$ color classes where $\Card{V_{i}} = s$ for all $i$ and each vertex has $d$ neighbours in each color class. We construct an instance $(G',k')$ of \textsc{$g(k)$-Approx-Vertex Deletion to Regular Subgraph} by setting $k' = k + \binom{k}{2}$, creating a vertex copy $V'_{i}$ for each color class $V_{i}$ and a set $P_{ij}$ of vertices for each pair $V_{i}$, $V_{j}$ of color classes, which will control edge selection. We make each $V'_{i}$ a complete graph. For every pair of vertices $u$,$v$ with $u \in V_{i}$ and $v \in V_{j}$, $i \neq j$, let $u'$ and $v'$ be the corresponding vertices in $V'_{i}$ and $V'_{j}$. If $uv \in E(G)$ we add two vertices $u'_{v'}$, $v'_{u'}$ to $P_{ij}$ with the edges $u'u'_{v'}$, $u'_{v'}v'_{u'}$ and $v'v'_{u'}$. For each pair of vertices $u_{v}$, $u'_{v'}$ in $P_{ij}$ where $u$ and $u'$ are in the same $V_{l}$ ($l =i,j$) and $u\neq u'$ we add the edge $u_{v}u'_{v'}$.
We choose $r$ to be greater than $\max\{(s-1)+d(k-1),2+(s-1)d\}$ such that $r$ is also of opposite parity to $s$, the smallest such $r$ suffices. Then for each $V'_{i}$ we add a set of vertices $V''_{i}$ where each vertex in $V''_{i}$ has an edge to each vertex in $V'_{i}$ such that the degree of each vertex in $V'_{i}$ is $r+1$. We do similarly for each $P_{ij}$ with a set $P'_{ij}$. We then increase the degree of each vertex in each $V''_{i}$ and each $P'_{ij}$ to $r+1$ by taking an set of $r+1$ clique, breaking an edge in each and attaching it at these two points to the vertex being adjusted. For each of these vertices, if the total number of vertices in the attached cliques is less than $g(k')$ we augment them by adding new $r+1$ cliques by breaking an edge in the old and new clique and reattaching them to each other.

\begin{claim}
If $G$ has a properly colored clique of size at least $k$, then we can delete at most $k'$ vertices to make $G'$ $r$-regular.
\end{claim}

This claim follows as in \cite{MathiesonSzeider12}. Note in particular that $G$ has a clique of size at most $k$ (as there are only $k$ color classes and we must use at least $k'$ vertices, one from each $V'_{i}$ and two from each $P_{ij}$ to make the graph $r$-regular. The vertices chosen for deletion correspond to the vertices and edges of the clique.

\begin{claim}
If $G'$ can be made $r$-regular by the deletion of at most $g(k')$ vertices, then $G$ has a properly colored clique of size at least $k$.
\end{claim}

Note that we must delete at least one vertex from each $V'_{i}$ and two from each $P_{ij}$. If we delete any vertices from the degree adjustment components of the graph, $V''_{i}$, $P'_{ij}$ or the adjustment cliques, then we must delete all such vertices, which would require more than $g(k)$ deletions. Therefore the $k'$ vertices must be the only vertices deleted. As with \cite{MathiesonSzeider12}, these must correspond to a properly colored $k$-clique in $G$.
\end{proof}

Note also that if we allow edge deletion as well as vertex deletion, we obtain no advantage by deleting any edges, therefore any deletions must be of vertices only. This gives the following corollary.

\begin{corollary}
\textsc{$g(k)$-Approx-Deletion to Regular Subgraph} is $\W[1]$-hard.
\end{corollary}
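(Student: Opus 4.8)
The final statement to prove is the Corollary asserting that \textsc{$g(k)$-Approx-Deletion to Regular Subgraph} is $\W[1]$-hard.

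\medskip

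The plan is to reduce from the $\W[1]$-hardness of \textsc{$g(k)$-Approx-Vertex Deletion to Regular Subgraph}, which is established in the immediately preceding Lemma. The two problems share an identical instance format (a graph $G=(V,E)$ and integers $k,r$), the only difference being that the deletion set $D$ may now draw from $V\cup E$ rather than $V$ alone. So the natural approach is the identity reduction: take exactly the graph $G'$ and parameters $k',r$ produced by the construction in the preceding proof, and argue that permitting edge deletions grants the adversary no additional power. This is precisely the hint given in the remark preceding the corollary statement.

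\medskip

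The key steps, in order, are as follows. First I would observe that one direction is immediate: any valid vertex-deletion solution is also a valid $(V\cup E)$-deletion solution of the same size, so a \Yes{}-instance of the vertex problem maps to a \Yes{}-instance of the edge-and-vertex problem. Second, and this is where the real content lies, I would show that for the \emph{structured} instances $G'$ coming from the construction, any solution of size at most $g(k')$ that deletes some edges can be converted into one of no greater size that deletes only vertices. The argument rests on the gadget structure: the degree-adjustment components ($V''_i$, $P'_{ij}$, and the attached $r+1$-cliques) are built so that the only affordable way to restore $r$-regularity is to delete exactly one vertex from each $V'_i$ and exactly two from each $P_{ij}$, these being the vertices encoding the clique and its incident edge-selections. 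Deleting an edge instead lowers the degree of only its two endpoints by one each, whereas the regularity deficit these gadgets must resolve requires the wholesale removal of a degree-$(r+1)$ vertex; patching the resulting cascade by edge deletions would force the deletion of entire adjustment cliques and blow the budget past $g(k')$. Hence no edge deletion can substitute for the forced vertex deletions, and including edges in $D$ confers no advantage.

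\medskip

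From these two steps the equivalence follows: $(G',k',r)$ is a \Yes{}-instance of the edge-and-vertex problem if and only if it is a \Yes{}-instance of the vertex-only problem, which in turn (by the preceding Lemma and its two claims) holds iff $G$ has a properly colored $k$-clique. Since the reduction is the identity and thus trivially \FPT{} with $k' = k+\binom{k}{2}$, the $\W[1]$-hardness transfers. I expect the main obstacle to be making the "edge deletions give no advantage" argument fully rigorous: one must verify carefully that \emph{every} vertex whose degree must be corrected lies in a component where the cheapest repair is a single vertex deletion of the encoding vertex, and that no clever combination of a few edge deletions across several gadgets can simultaneously fix multiple degree constraints while staying within the $g(k')$ budget. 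This is a gadget-by-gadget case analysis rather than a deep difficulty, and it parallels the parity-and-degree bookkeeping already carried out in~\cite{MathiesonSzeider12}, so I would lean on that construction's invariants to keep the case check short.
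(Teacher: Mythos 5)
Your proposal takes essentially the same route as the paper: the paper proves this corollary with exactly the observation you build on, namely that the constructed instance $G'$ from the preceding lemma is reused unchanged and that on it ``we obtain no advantage by deleting any edges, therefore any deletions must be of vertices only,'' so the \W[1]-hardness transfers. Your elaboration of why edge deletions cannot help (every degree-$(r+1)$ vertex must be repaired, and edge-based repairs cascade into the adjustment gadgets and exceed the budget $g(k')$) is in fact more detail than the paper itself supplies, and is consistent with its intent.
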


By subproblem containment, the more general version where we allow each vertex to have a list of possible final degrees, rather than simple degree $r$ is also $\W[1]$-hard.

\begin{corollary}
\textsc{$g(k)$-Approx-Weighted Degree Constrained Deletion} is $\W[1]$-hard when the possible editing operations are vertex deletion or vertex and edge deletion.
\end{corollary}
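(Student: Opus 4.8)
The plan is to prove this by restriction (subproblem containment): I would exhibit \textsc{$g(k)$-Approx-Deletion to Regular Subgraph} as a special case of \textsc{$g(k)$-Approx-Weighted Degree Constrained Deletion} obtained by a trivial, parameter-preserving instantiation of the weight and degree-list functions, and then invoke the $\W[1]$-hardness of the former established in the preceding corollary. This matches the informal remark immediately preceding the statement, that allowing a list of final degrees rather than the single value $r$ only generalizes the problem.

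Concretely, given an instance $(G,k,r)$ of \textsc{$g(k)$-Approx-Deletion to Regular Subgraph}, I would map it to the instance $(G,k,r,\rho,\delta)$ of the weighted problem in which $\rho(x)=1$ for every $x \in V \cup E$ (legal, since $1 \in [1,k+1]$) and $\delta(v)=\{r\}$ for every $v \in V$ (legal, since $r \in \{0,\ldots,r\}$). Under $\rho \equiv 1$ the budget constraint $\sum_{d \in D}\rho(d)\le k$ collapses to $\Card{D}\le k$, while the per-vertex constraint $\sum_{e \in N^{G'}(v)}\rho(e)\in\delta(v)$ collapses to $\deg_{G'}(v)=r$, i.e.\ exactly $r$-regularity. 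Hence the two problems coincide on these instances; the map is computable in polynomial time and leaves the parameter $k$ unchanged, so it is a legitimate parameterized reduction.

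The one point warranting a little care is that this is a parameterized \emph{approximation} reduction, so I must preserve the promise gap, not merely the exact \Yes{}/\No{} answer. But because the instantiation is an identity on the underlying combinatorial structure, a deletion set of total weight at most $g(k)$ is precisely a deletion set of cardinality at most $g(k)$, and conversely. Thus the gap between ``a solution of size $\le k$ exists'' and ``no solution of size $\le g(k)$ exists'' transfers verbatim from the source problem. Finally, since the source corollary already holds both for vertex deletion alone and for combined vertex and edge deletion --- the observation that edge deletions yield no advantage carries over unchanged --- the stated restriction on the editing operations is inherited immediately.

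I do not anticipate a genuine obstacle: the entire content is the observation that the weighted list problem specializes to regular-subgraph deletion, so essentially the only thing to verify is that this specialization is faithful on the budget constraint and on the degree constraint \emph{simultaneously}, which the unit-weight, singleton-list choice guarantees.
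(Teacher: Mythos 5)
Your proposal is correct and takes exactly the paper's route: the paper dispatches this corollary by ``subproblem containment,'' i.e.\ the weighted, list-valued problem restricts to \textsc{$g(k)$-Approx-Deletion to Regular Subgraph} under unit weights and singleton degree lists, which is what you spell out with $\rho\equiv 1$ and $\delta(v)=\{r\}$. Your additional checks (budget collapsing to cardinality, weighted neighbourhood sum collapsing to degree, gap preservation, and inheritance of the editing-operation restriction) are just a careful elaboration of the same one-line argument.
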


\subsection{A Maintenance Algorithm for Vertex Cover}

Thus we may conclude that maintenance is not feasible (in the complexity guarantee sense) even if we have polynomial time at each step (in fact we may relax this perhaps to fixed-parameter tractable time, dependent on the number of steps), if the parameterized approximation problem is $\W[t]$-hard for any $t \geq 1$.

\sloppypar However if the parameterized approximation problem is fixed-parameter tractable, then we may have some hope. In a broad sense many kernelisation algorithms may be reinterpreted as maintenance algorithms, e.g.,

\begin{lemma}[\cite{Prieto05}]
\textsc{$k$-Max Cut} has an $\FPT{}$-time $2k$-local maintenance algorithm.
\end{lemma}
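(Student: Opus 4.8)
The plan is to realise Prieto's linear kernelization for \textsc{$k$-Max Cut} as a maintenance procedure, exactly in the spirit of the preceding remark that kernelisations reinterpret as maintenance algorithms. I would take the maintained solution to be a bipartition (cut) of $V$, stored as a side-label in the state of each vertex together with a locally held count of its crossing incident edges; the global quantity of interest is the total number $c$ of crossing edges, and the desired quality is ``$c \geq k$ whenever a cut of value at least $k$ exists.'' The one fact I would lean on throughout is the random-cut / Edwards--Erd\H{o}s bound: every connected graph with $m$ edges admits a cut of value at least $m/2$, and any locally optimal cut already attains this. Consequently any connected component carrying at least $2k$ edges is, by itself, a certificate that the instance is a \Yes{}-instance.

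The source of the constant $2k$ in the locality bound is the structural dichotomy that also makes Prieto's kernel linear. A connected component with fewer than $2k$ edges has at most $2k$ vertices, hence diameter strictly less than $2k$; these ``small'' components are precisely the reduced kernel, since every ``large'' component (at least $2k$ edges) may be discarded after recording that it already contributes a cut of value at least $k$. Thus any genuinely combinatorial recomputation takes place inside a small component, and a small component is wholly contained in the ball of radius $2k$ about any of its vertices. This is what confines the update to the $2k$-neighbourhood of the edited location.

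For the update itself I would, for each of the four editing operations, first locate within the $2k$-ball around the edit the affected component $C$. If $C$ is, or after the edit becomes, small, then every vertex of $C$ sees all of $C$ and recomputes the canonical maximum cut of $C$ by brute force over its at most $2k$ vertices in time $2^{O(k)}$, which is \FPT{}, adopting its side consistently; the scalar $c$ is then corrected by the change in $C$'s contribution. If $C$ stays large, the $m/2$ bound certifies that it remains a \Yes{}-certificate: an added edge only helps, and a component that is still large after a deletion had at least $2k+1$ edges beforehand, so a locally optimal cut on it had value at least $k+1$ and still has value at least $k$ after losing one crossing edge, with no repair needed. The decision and witness are read off from the updated $c$, so quality is preserved.

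The main obstacle I anticipate is reconciling the \emph{global, additive} nature of the objective with strict $2k$-locality when an explicit witness is demanded on persistently large components. Such a component may have diameter far exceeding $2k$, and although the $m/2$ bound keeps it a \Yes{}-certificate, a single deletion can leave a stored cut that is no longer locally optimal, with the restoring flip located arbitrarily far from the edit, so naive local repair can cascade beyond radius $2k$. I would handle this by carrying the invariant ``locally optimal on every large component,'' which the computation above shows survives a single edit on an unchanged-large component \emph{without} repair; the only components ever needing true recomputation are those that shrink into the small regime, and those are $2k$-local. The genuinely delicate points, where I expect the proof to require the most care and to lean on the correctness of Prieto's reduction rules, are: showing that this invariant is itself re-establishable within radius $2k$ (or relaxing it to one that both forces the threshold and is locally maintainable), and tracking the merge and split behaviour of components under edge addition and under edge or vertex deletion, since a single edit may reclassify a component between the small and large regimes and straddle two $2k$-balls, while still altering $c$ by only a bounded amount in one (possibly merged) component.
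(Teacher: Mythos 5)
The paper offers no proof of this lemma at all: it is cited directly from Prieto's kernelization paper, under the preceding remark that kernelizations built from ``simple, local reduction rules'' may be reinterpreted as maintenance algorithms. Your first two paragraphs are precisely that reinterpretation, and they are correct and sufficient for the lemma: by the random-cut bound any component with at least $2k$ edges is automatically a \Yes{}-certificate, while a component with fewer than $2k$ edges has at most $2k$ vertices, hence lies wholly inside the $2k$-ball of any of its vertices and can be solved by brute force in $2^{O(k)}$ time. This is the same approach the paper intends, and it also explains the paper's own observation that these kernel-derived algorithms ``do not use any prior solution'': each vertex can simply recompute its state from its $2k$-ball at every step, so no stored cut needs to be carried forward at all.

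Where you go beyond the paper --- maintaining an explicit stored cut on persistently large components via the invariant ``locally optimal on every large component'' --- there is a genuine flaw, which you partly sense but misstate. Your computation shows only that the cut \emph{value} survives one deletion, not that local optimality does: after a deletion the stored cut on a large component may cease to be locally optimal, so the invariant is \emph{not} re-established ``without repair,'' and over a sequence of deletions the stored value can erode below $k$ while the component remains large; restoring local optimality by flipping vertices is a cascade that cannot be confined to radius $2k$. The fix is exactly the relaxation you hint at in your final paragraph: on large components the maintained witness should be the edge count itself (which certifies a cut of value at least $m/2 \geq k$), not an explicit cut, and explicit cuts are only ever produced on small components, where the $2k$-local brute force applies. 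With that substitution the obstacle disappears, your proof closes, and it coincides with the kernelization-as-maintenance argument the paper attributes to Prieto.
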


\begin{lemma}[Attribute to Buss in~\cite{DowneyF99}]
\textsc{Vertex Cover} has an $\FPT{}$-time $2k^2$-local maintenance algorithm.
\end{lemma}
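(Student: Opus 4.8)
The plan is to reinterpret Buss's classical kernelization for \textsc{Vertex Cover} as a maintenance procedure, and to show that the reduction rules operate entirely within a $2k^2$-neighbourhood of any edited element. Recall Buss's two observations: first, any vertex $v$ of degree greater than $k$ must lie in every vertex cover of size at most $k$ (otherwise all of its more than $k$ neighbours would be forced in), so such vertices can be placed in the cover immediately and deleted while decrementing $k$; second, once every remaining vertex has degree at most $k$, a cover of size at most $k$ can touch at most $k\cdot k = k^2$ edges, so a \Yes{}-instance has at most $k^2$ edges and hence at most $2k^2$ non-isolated vertices. The solution $S_i$ is thus completely determined by this kernel together with the forced high-degree vertices.

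First I would maintain, alongside each graph $G_i$, the cover $S_i$ as the union of (i) the forced high-degree vertices and (ii) an optimal cover of the reduced kernel. When the edit from $G_i$ to $G_{i+1}$ is an edge addition or deletion incident to vertices $u,v$ (vertex additions and deletions are handled analogously), each vertex checks its own degree --- a $1$-local test --- to decide whether it has crossed the degree-$k$ threshold; since an edge edit changes a degree by exactly one, only $u$ and $v$ can change high-degree status, so the forced set changes only near the edit. The remaining work is to recompute an optimal cover of the kernel component affected by the edit, which I would do by brute force over its at most $2k^2$ vertices --- this is $2^{O(k^2)}$ time and hence $\FPT{}$ in $k$.

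The locality bound is the crux. Here I would argue that in a \Yes{}-instance every connected component of the Buss-reduced graph has at most $2k^2$ vertices and therefore diameter strictly less than $2k^2$; consequently any vertex lying in the affected component can gather the entire state of that component --- all vertices, their induced edges, and their current cover membership --- from within its own $2k^2$-neighbourhood. Since an optimal cover decomposes over connected components, recomputing the cover of the single affected component and leaving all others untouched yields a valid optimal $S_{i+1}$, and every vertex computes its new state from information available within distance $2k^2$.

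The step I expect to be the main obstacle is reconciling the kernel metric with the locality metric in the \emph{original} graph, particularly across the forced high-degree vertices: deleting such a vertex shrinks distances in the kernel, and an edit could in principle cause a high-degree vertex to drop below the threshold, reconnecting previously separated parts of a component. I would handle this by observing that a single edit perturbs degrees by at most one, so at most one vertex changes forced status per step; the reintroduced (or removed) vertex together with its at-most-$k$ surviving incident edges keeps the affected component within the $2k^2$ size bound for any instance that remains a \Yes{}-instance, while instances that have become \No{} are detected locally by the edge-count threshold being exceeded within the $2k^2$-ball.
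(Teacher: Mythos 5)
Your proposal is correct and takes essentially the approach the paper intends: the paper gives no explicit proof (it simply attributes the result to Buss's kernelization and notes that such ``simple, local reduction rules'' reinterpret as solution-oblivious maintenance algorithms), and your argument --- forced high-degree vertices, a kernel with at most $k^2$ edges and $2k^2$ non-isolated vertices, hence kernel components of diameter below $2k^2$ that each vertex can gather and brute-force in $\FPT{}$ time --- is precisely that reinterpretation, caveats about component merging included. The only slip is trivial: an edge edit can change the forced status of both endpoints, not just one, which does not affect the bound.
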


Note that these examples were not chosen as necessarily the most efficient algorithms for these problem, but because they employ simple, local reduction rules. However in both cases (and all such adaptations) the algorithm does not use any prior solution, so perhaps is not the most effective maintenance strategy. Nonetheless they do give goals for the performance of any maintenance algorithm.

By allowing an approximate solution, we can develop potentially much simpler, faster and more local maintenance algorithms. We demonstrate with the following case, \textsc{Vertex Cover}, which is fixed-parameter tractable, therefore fixed-parameter approximable and has a simple greedy $2$-approximation. Furthermore there is evidence that a $2$-approximation is in some sense the best possible ratio~\cite{KhotRegev08}. In this case, the following maintenance algorithm is efficient and of tolerable approximation quality.

\begin{lemma}
There is a 1-local maintenance algorithm for \textsc{Maximal Matching} and hence there is a 1-local maintenance algorithm for \textsc{Vertex Cover} with an approximation factor of $2$.
\end{lemma}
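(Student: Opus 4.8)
The plan is to maintain a maximal matching $M$ under the four editing operations (edge addition, edge deletion, vertex addition, vertex deletion) using only the $1$-neighborhood of the affected vertices, and then to read off a vertex cover by taking both endpoints of every matched edge. The correctness of the vertex cover half is standard: for any maximal matching $M$, the set $\bigcup_{e \in M} e$ of matched vertices is a vertex cover (any uncovered edge would be addable to $M$, contradicting maximality), and since any vertex cover must hit every edge of $M$ and the edges of $M$ are disjoint, we have $|\text{OPT}| \geq |M|$, giving $|\bigcup_{e \in M} e| = 2|M| \leq 2\,|\text{OPT}|$. Hence a $1$-local maintenance algorithm for \textsc{Maximal Matching} immediately yields a $1$-local $2$-approximate maintenance algorithm for \textsc{Vertex Cover}, and the work reduces to the matching case.

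\textbf{Maintaining the matching.} First I would handle each editing operation by a local repair rule. For \emph{edge addition} of $uv$: if both $u$ and $v$ are currently unmatched, add $uv$ to $M$; otherwise $M$ remains maximal and nothing changes. For \emph{vertex addition} of an isolated vertex, $M$ is trivially still maximal. The interesting cases are the deletions. For \emph{edge deletion} of $uv$: if $uv \notin M$, $M$ stays maximal. If $uv \in M$, then $u$ and $v$ become exposed; each may now be matchable to a neighbor, so I would have $u$ inspect its neighbors and greedily match to any unmatched neighbor, and likewise for $v$. For \emph{vertex deletion} of $w$: if $w$ was unmatched, remove it and stop; if $w$ was matched to some $x$ via edge $wx \in M$, remove $wx$ from $M$, which exposes $x$, and then let $x$ greedily match to an unmatched neighbor if one exists.

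The key observation that makes this correct and local is that after any single edit only a bounded number of vertices become newly exposed (at most two, the endpoints of a deleted matched edge, or one exposed partner of a deleted vertex), and re-matching an exposed vertex $x$ to a neighbor $y$ can only fail to restore maximality if $y$ itself was matched. Crucially, when $x$ re-matches to a previously unmatched neighbor, it never exposes anyone, so no cascade propagates beyond distance $1$: each exposed vertex either finds an unmatched neighbor within its own neighborhood or has none, in which case it correctly remains exposed and $M$ is again maximal. I would verify the maximality invariant directly: after the repair, no edge has both endpoints unmatched, because the only vertices whose matched-status changed are the $O(1)$ exposed vertices, each of which we explicitly checked against all its neighbors.

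\textbf{The main obstacle.} The subtle point — and the step I expect to require the most care — is arguing that the repair genuinely requires no propagation beyond the $1$-neighborhood, i.e.\ that greedily re-matching an exposed vertex to an \emph{unmatched} neighbor never forces a second vertex to become exposed, so no augmenting-path-style chain of length greater than one is ever triggered. This is what distinguishes maintaining a \emph{maximal} matching (which we do) from maintaining a \emph{maximum} matching (which would require augmenting paths of unbounded length and hence could not be $1$-local). I would emphasize that we only ever match an exposed vertex to a currently unmatched neighbor and never rearrange existing matched pairs; this keeps each vertex's recomputation dependent solely on the states of its direct neighbors, establishing $1$-locality, while the constant number of exposed vertices per edit keeps the per-step work bounded.
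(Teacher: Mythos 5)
Your proposal is correct and follows essentially the same approach as the paper: maintain a maximal matching by greedy local repair at the site of each edit (re-matching an exposed vertex only to an unmatched neighbor, so no cascade occurs), and take the matched vertices as the cover, with the standard maximal-matching $2$-approximation argument. The paper phrases this implicitly via ``pair-vertices'' of cover vertices and only treats edge additions and deletions, whereas you additionally spell out vertex additions/deletions and the no-propagation invariant, but the underlying algorithm and correctness argument are the same.
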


\begin{proof}
The algorithm exploits the result that any maximal matching is a $2$-approximation for the size of the minimum vertex cover.

Given a graph $G$, with vertex cover $S$, and a set of edge deletions and additions $T$, the algorithm proceeds by maintaining a matching at the site of the deletions and additions. We may assume that at each completed step in the algorithm (as this algorithm can be used to generate the initial solution) each non-isolated vertex in the vertex cover has a \emph{pair-vertex}, its partner in the matching.

Then for each element $t$ of $T$, if $t$ is a deletion that separates a vertex from its pair-vertex, and the vertex has any neighbour not in the cover, we choose arbitrarily one of these neighbours to be its new pair-vertex. If it has no such neighbours, the vertex is removed from the cover.

Otherwise if $t$ is an addition that connects two vertices not in the cover, they are both added to the cover and are noted as each others pair-vertex.

Then at any point in the algorithm each vertex in the cover has an adjacent pair-vertex (i.e., the edge between the two is a matching edge), and there is no edge that has neither endpoint in the cover. Therefore the implicit matching is maximal. Thus we immediately have our $2$-approximation.
\end{proof}

By way of contrast, \textsc{Max Cut} admits a simple $\frac{1}{2}$ approximation algorithm~\cite{SahniGonzalez76} (though this is far from the best ratio) which is easily translated into a simple maintenance algorithm however the locality seems to be the diameter of the graph in the worst case. This suggests that there is potentially some separation possible between ``local'' and ``non-local'' problems that is not captured by regular complexity classes. This idea has been explored from a different perspective by Milteren \emph{et al.}~\cite{Miltersen1994} who show a split of $\Poly{}$-complete problems by the class $\normalfont{incr\text{-}POLYLOGTIME}$.

\section{Conclusion}
\label{sect:conclusion}


We show that $\W[1]$-hardness precludes maintenance, and indeed given sufficient restriction on the stepwise computation no maintenance algorithm can guarantee any but the most trivial approximation ratios. However, as demonstrated with \textsc{Vertex Cover}, some problems do admit fast, effective approximate maintenance algorithms. It is not clear whether there is a neat classification of these problems, however fixed-parameter tractability for the related parameterized (approximation) problem(s) is at least a necessary condition. The most likely situation is that there is some proper subset of the problems in $\FPT{}$ that admit a maintenance algorithm, as the structure of a maintenance algorithm seems to rely on some as yet undefined localizing property which is unlikely to be identical to the property of being fixed-parameter tractable. If indeed there is a coherent underlying property that allows maintenance, then perhaps it would also be possible to construct \emph{maintenance-preserving} reductions (similar to the \emph{incremental reductions} of \cite{Miltersen1994}), which would allow a simpler and more robust demonstration of the possibility or impossibility of maintenance for a given problem.

\bibliographystyle{abbrv}

\appendix

\section{Parameterized Approximation Complexity}
\label{sect:para}

In this section, we only provide the necessary definitions and properties of Parameterized Complexity and Parameterized Approximation Complexity. We refer the reader to~\cite{DowneyF99,FlumG06} and~\cite{Marx08} respectively for more details.

\subsection{Basic Parameterized Complexity}

Parameterized Complexity explores the complexity of combinatorial problems using parameters as independent measures of structure in addition to the overall size of the input. An instance $(I,k)$ of a parameterized problem consists of the input $I$, corresponding to the input of a classical problem and an integer parameter $k$, a special part of the input independent from $\Card{I}$.

A problem is \emph{fixed-parameter tractable} (or in $\FPT{}$) if there is an algorithm that solves each instance $(I,k)$ of the problem in time bounded by $f(k)\cdot\Card{I}^{O(1)}$ where $f$ is a computable function dependent only on $k$.

Conversely hardness for any class in the ${\normalfont W}$-hierarchy provides evidence that a problem is not fixed-parameter tractable. Hardness for such classes is typically established by \emph{parameterized reduction}, the Parameterized Complexity reduction scheme where given an instance $(I,k)$ of problem $\Pi_{1}$ an instance $(I',k')$ of problem $\Pi_{2}$ is computed in time bounded by $f(k)\cdot\Card{I}^{O(1)}$, with $k' \leq g(k')$ for some computable function $g$ and $(I,k)$ is a \Yes{}-instance if and only if $(I',k')$ is a \Yes{}-instance.

\subsection{Parameterized Approximation Complexity}

The additional measure embodied in the parameter also provides an alternative possibility for approximation problems. Of particular interest is approximating the cost of the solution, where the parameter $k$ is the desired cost of the solution. Given parameterized problem $\Pi$ with an additional optimization objective (either $\min$ or $\max$), the general cost minimization (maximization) parameterized problem for $\Pi$ is:

\pcaproblem{$g(k)$-Approx-$\Pi$}{A instance $I$ of $\Pi$, an integer $k$.}{$k$.}{Either \No{}, asserting that there is no solution of size at most (at least) $k$ for $I$, or a solution of size at most (at least) $g(k)$.}

Of course for minimization problems the approximation is only interesting if $g(k) \geq k$, and vice versa for maximization. For specific functions $g(k)$ we obtain the following two interesting subcases:

\pcaproblem{$c$-Add-Approx-$\Pi$}{A instance $I$ of $\Pi$, an integer $k$.}{$k$.}{Either \No{}, asserting that there is no solution of size at most (at least) $k$ for $I$, or a solution of size at most $k+c$ (at least $k-c$).}

\clearpage

\pcaproblem{$c$-Mult-Approx-$\Pi$}{A instance $I$ of $\Pi$, an integer $k$.}{$k$.}{Either \No{}, asserting that there is no solution of size at most (at least) $k$ for $I$, or a solution of size at most $ck$ (at least $k/c$).}

Then $\W[t]$-hardness for any $t \geq 1$ for such an approximation problem gives evidence that there is no fixed-parameter tractable algorithm that can produce a solution of size bounded by $g(k)$ with $k$ as the parameter.

\end{document}